\pgfplotsset{compat=1.16}
\theoremstyle{plain}
\newtheorem{theorem}{Theorem}[section]
\newtheorem{proposition}[theorem]{Proposition}
\newtheorem{lemma}[theorem]{Lemma}
\theoremstyle{definition}
\newtheorem{definition}[theorem]{Definition}
\theoremstyle{remark}
\newtheorem{remark}[theorem]{Remark}
\begin{document}

\begin{abstract}
In this article we analyze several mathematical models with singularities where the classical cotangent model is replaced by a $b$-cotangent model. We provide physical interpretations of the singular symplectic geometry underlying in $b$-cotangent bundles featuring two models: the canonical (or non-twisted) model and the twisted one. The canonical one models systems on manifolds with boundary and the twisted one represents Hamiltonian systems with a singularity on the fiber. The twisted cotangent model includes (for linear potentials) the case of fluids with dissipation. We prove (non)-existence of cotangent lift dynamics and show the existence of an infinite number of escape orbits in this model. We also discuss more general physical interpretations of the twisted and non-twisted $b$-symplectic models. Twisted $b$-symplectic models yield in a natural way escape orbits that go to the critical set. Under compactness assumptions those escape orbits are continued as \emph{singular periodic orbits} in the sense of \cite{MirandaOms21} and \cite{Miranda20}. These models offer a Hamiltonian formulation for systems which are dissipative, extending the horizons of Hamiltonian dynamics and opening a new approach to study non-conservative systems.
\end{abstract}

\author{Baptiste Coquinot}
\address{Baptiste Coquinot, Laboratoire de Physique de l'\'Ecole Normale Sup\'erieure, ENS, Universit\'e PSL, CNRS, Sorbonne Universit\'e, Universit\'e Paris Cit\'e, 24 rue Lhomond, 75005 Paris, France}
\email {baptiste.coquinot@ens.fr}
\author{Pau Mir}
\address{Pau Mir,
Laboratory of Geometry and Dynamical Systems, Universitat Polit\`{e}cnica de Catalunya, Avinguda del Doctor Mara\~{n}on 44-50, 08028, Barcelona }
\email {pau.mir.garcia@upc.edu}
\author{Eva Miranda}
\address{Eva Miranda,
Laboratory of Geometry and Dynamical Systems $\&$ Institut de Matem\`atiques de la UPC-BarcelonaTech (IMTech), Universitat Polit\`{e}cnica de Catalunya, Avinguda del Doctor Mara\~{n}on 44-50, 08028, Barcelona \\ CRM Centre de Recerca Matem\`{a}tica, Campus de Bellaterra
Edifici C, 08193 Bellaterra, Barcelona }
\email{eva.miranda@upc.edu}

\thanks{B. Coquinot is funded by the J.-P. Aguilar grant of the CFM Foundation}
\thanks{P. Mir is funded in part by the Doctoral INPhINIT - RETAINING grant ID 100010434 LCF/BQ/DR21/11880025 of “la Caixa” Foundation}
\thanks{P. Mir and E. Miranda are partially supported by the AEI grant PID2019-103849GB-I00 of MCIN/ AEI /10.13039/501100011033}
\thanks{E. Miranda is supported by the Catalan Institution for Research and Advanced Studies via an ICREA Academia Prizes 2016 and 2021 and by the Spanish State Research Agency, through the Severo Ochoa and Mar\'{\i}a de Maeztu Program for Centers and Units of Excellence in R\&D (project CEX2020-001084-M). E. Miranda also acknowledges partial support from the grant “Computational, dynamical and geometrical complexity in fluid dynamics”, Ayudas Fundación BBVA a Proyectos de Investigación Científica 2021.}

\title{Singular cotangent models in fluids with dissipation}

\maketitle

\section{Introduction}

The study of fluid mechanics has a long and rich history, revealing a complex structure on both the physical and the mathematical levels. We point to recent work detailing how new geometric facets of this complexity have been revealed through several reincarnations (see \cite{CardonaMirandaPeralta-SalasPresas21,CardonaMirandaPeralta-Salas21}). As it is well-known, in the Navier-Stokes equation the Reynolds number provides a measure of fluid complexity, giving rise to turbulence for high Reynolds number flows (with infinite Reynolds number corresponding to the Euler flow). The present work is not specifically about fluid mechanics, but an aligned investigation of the singular geometric nature of the case of a $0$ (or very low) Reynolds number flow, corresponding to a laminar flow, expressed in terms of a finite-dimensional analogy.

In this article we give a mathematical interpretation of the physics of fluids obeying the Stokes' Law using the Hamiltonian formalism on a singular cotangent model. However, the inherent geometry of this system does not let us identify this model as a twisted cotangent lift in general.

Symplectic geometry provides the landscape where classical mechanics take place. The pair of position and momenta is the physical manifestation of the existence of a cotangent bundle underlying this picture. The role of the base and fibers of the cotangent bundle is an important landmark that fixes and makes precise Hamiltonian dynamics. However, this perfect symplectic picture is often insufficient to describe the complexity of physical phenomena. Poisson geometry provides a more general scenery appropriate to capture the intricacy of physical systems. Nevertheless, Poisson geometry is, in general, too involved and even the existence of appropriate local coordinates is a difficult battleground. From this perspective, singular symplectic manifolds provide a much more controlled terrain to fulfill some of these needs. In this article we explore some physical systems that can be described as singular symplectic manifolds. We focus on the class of $b$-symplectic manifolds and identify two models: a canonical and a twisted one. We associate relevant physical systems to these two models.

In the context of symplectic geometry, singular forms have been an important object of study in the last years. A main class of such singular forms is the class of $b$-symplectic forms, formally introduced in \cite{GuilleminMirandaPires11} and \cite{GuilleminMirandaPires14}. They provide a way to model systems with boundary and to study manifolds through compactification.

Among the variety of applications of $b$-symplectic geometry (and its $b$-contact counterpart), there have been obtained remarkable results on general integrable systems, celestial mechanics and fluid dynamics (see, for instance, \cite{MirandaOms18}, \cite{MirandaOms21}, \cite{PollardAlexander19}, \cite{BraddellDelshamsMirandaOmsPlanas19}, \cite{CardonaMirandaPeralta-Salas19} \cite{DelshamsKiesenhoferMiranda17}).

The phase space of a physical problem can be associated with the cotangent bundle of the configuration space. Therefore, it is automatically symplectic and this is one of the main reasons that makes symplectic geometry the natural language of mechanics. In the general setting, the physical Hamiltonian is the sum of a kinetic term depending on the momentum and a potential term depending only on the position. It provides an associated Hamiltonian flow which describes exactly the physical dynamics and yields the usual Newton's laws.

At the crossroads of $b$-symplectic techniques and cotangent models for physical systems, \textit{singular cotangent models} supply the techniques to generalize procedures such as the cotangent lift from symplectic manifolds to $b$-symplectic manifolds. These techniques were explored in \cite{KiesenhoferMiranda17} in the integrable case and following two different approaches. In the first approach, the singularity of the integrable system defined in a cotangent bundle of a smooth manifold is placed at the base manifold. In the second case, the so-called \textit{twisted case}, the singularity is placed at the fiber. In both cases, the singularity permeates the geometric structure and  the $b$-symplectic form  carries the characterization of the singularity. Singular cotangent models have also been considered in \cite{CardonaMirandaPeralta-Salas19}, \cite{MirandaOms18} and \cite{MirandaOms21}. Connections to other singularities in physical systems are explored in \cite{YoshidaMorrison20} (see also \cite{NestTsygan01} and \cite{MirandaScott21} for the geometrical study of a more general class of singular structures called $E$-symplectic structures).

In this article we give a new application of the twisted cotangent model. In particular, we present the Stokes' Law of motion for free-falling particles in fluids with viscosity as a twisted cotangent model. We prove that, in general, a one-dimensional motion with a dissipation which is proportional to the velocity can be modeled by a twisted $b$-symplectic form.

The fact that $b$-symplectic techniques can be used to model fluid systems is interesting because, classically, the study of the evolution of moving fluids has been tackled via partial differential equations such as the Navier-Stokes equations. A set of PDEs can model viscous Newtonian fluids expressing their mass and momentum conservation but, usually, solutions to these PDEs can only be found numerically. Besides, in general it has been difficult to prove if, for some initial conditions, they are smooth or even continuous. This complexity has led to other approaches to model the behaviour of fluids. Among the alternative formulations of fluid dynamics, there are the Hamiltonian and Lagrangian formulations, which are used naturally in a wide collection of mechanics problems. In this respect, the $b$-symplectic approach given in this paper contributes to this alternative approach.

In \cite{Morrison86}, Morrison introduced the metriplectic formalism as an extension of the Hamiltonian formalism so as to include dissipation while maintaining a conserved energy-like quantity. This formalism couples Poisson brackets, coming from the Hamiltonian symplectic formalism, with metric brackets, coming from out-of-equilibrium thermodynamics (see also \cite{Morrison84a}, \cite{Morrison84b}, \cite{Morrison98}, \cite{MaterassiMorrison17} and \cite{CoquinotMorrison20}). Thus, the formalism describes systems with both Hamiltonian and dissipative components that can model friction, electric resistivity, collisions and more, in various contexts ranging over biophysics, geophysics, and plasma physics. The construction builds in asymptotic convergence to a pre-selected equilibrium state.

Following these ideas, in this article we make use of Hamilton's equations to model a system which is dissipative in the classical sense. The original idea is that we do not rearrange the conservative Hamilton's equations but, instead, we introduce a singularity at the level of the symplectic structure of the manifold, which we equip with a twisted $b$-symplectic form.

\subsection*{Organization of this article} In Section \ref{section:preliminaries} we give a crash course on $b$-symplectic geometry. In Section \ref{section:bsymplecticmodel} we introduce the new model for fluids with dissipation based on a twisted $b$-symplectic structure. We start with the one-dimensional case and the linear potential, which provides an analogue of the Stokes' Law, and we extend it to higher dimensions and more general potentials. We observe the existence of escape orbits in the twisted model and prove that the dynamics of our model does not come from the cotangent lift of a group action. In Section \ref{section:timedependentmodels} we consider time-dependent singular models in which friction arises from a re-scaling of time. Finally, in Section \ref{section:conclusions} we summarize the results of the paper and present our conclusions.

\subsection*{Acknowledgements}

This collaboration started with an internship of Baptiste Coquinot at the Laboratory of Geometry and Dynamical Systems. The visit caught us at the beginning of the pandemics so it was finally virtual. Baptiste Coquinot would like to acknowledge the (numerical) hospitality of the Laboratory of Geometry and Dynamical Systems of the Universitat Politècnica de Catalunya and especially Eva Miranda for her supervision.

\section{Preliminaries}
\label{section:preliminaries}

\subsection{$b$-Symplectic geometry}

A symplectic manifold is a manifold $M$ which admits a \emph{symplectic form $\omega$} which is closed and non-degenerate $2$-form. Given a function $H$ over a symplectic manifold, called Hamiltonian, it is useful to consider its associated \emph{Hamiltonian flow}, which is the flow of the vector field $X$ defined by $\iota_X \omega=-dH$. The existence and uniqueness of $X$ and its flow are a consequence of the non-degeneracy of the symplectic form.

In physics, the usual and more general formalism used to study dynamics is Poisson geometry \cite{MarsdenRatiu99}). Poisson manifolds are generalizations of symplectic manifolds in which the symplectic form $\omega$ is replaced by a bivector $\Pi$. Indeed, a symplectic form $\omega$ in a symplectic manifold $(M,\omega)$ may be seen as a smooth map from the space of vector fields $\mathfrak{X}(M)$ to the space of 1-forms $\Omega^1(M)$. Among the large class of Poisson manifolds we find $b$-symplectic manifolds, that can also be considered a wider class of manifolds which contains symplectic manifolds.

The basic definitions of $b$-symplectic geometry start with the notions of \textit{$b$-manifold} (a pair $(M,Z)$ where $Z$ is a hypersurface in a manifold $M$), \textit{$b$-map} (a map $f:(M_1,Z_1) \longrightarrow (M_2, Z_2)$ between $b$-manifolds with $f$ transverse to $Z_2$ and $Z_1 = f^{-1}(Z_2)$) and \textit{$b$-vector field} (a vector field on $M$ which is tangent to $Z$ at all points of $Z$).

Let $(M^n,Z)$ be a $b$-manifold. If $x$ is a local defining function for $Z$ on an open set $U\subset M$ and $(x,y_1,\ldots,y_{n-1})$ is a chart on $U$, then the set of $b$-vector fields on $U$ is a free $C^\infty(M)$-module with basis
$$(x {\frac{\partial}{\partial x}}, {\frac{\partial}{\partial y_1}},\ldots, {\frac{\partial}{\partial y_n-1}}).$$

There exists a vector bundle associated to this module called \textit{$b$-tangent bundle} and denoted by $^b TM$. The \textit{$b$-cotangent bundle} $^b T^*M$ of $M$ is defined to be the vector bundle dual to $^b TM$.

For each $k>0$, let $^b\Omega^k(M)$ denote the space of sections of the vector bundle $\Lambda^k(^b T^*M)$, which are called \textit{$b$-de Rham $k$-forms}. For any defining function $f$ of $Z$, every $b$-de Rham $k$-form can be written as
\begin{equation}
\omega=\alpha\wedge\frac{df}{f}+\beta, \text{ with } \alpha\in\Omega^{k-1}(M) \text{ and } \beta\in\Omega^k(M).
\end{equation}

A special class of closed $b$-de Rham $2$-forms is the class of \textit{$b$-symplectic forms} as defined in \cite{GuilleminMirandaPires14}. It contains forms with singularities and can be introduced formally for $b$-symplectic manifolds, making it possible to extend the symplectic structure from $M\backslash Z$ to the whole manifold $M$.

\begin{definition}[$b$-symplectic manifold]
Let $(M^{2n},Z)$ be a $b$-manifold and $\omega\in\,^b\Omega^2(M)$ a closed $b$-form. We say that $\omega$ is \textit{$b$-symplectic} if $\omega_p$ is of maximal rank as an element of $\Lambda^2(\,^b T_p^* M)$ for all $p\in M$. The triple $(M,Z,\omega)$ is called a \textit{$b$-symplectic manifold}.
\label{def:bsymplecticmanifold}
\end{definition}

\subsection{$b$-cotangent lifts}
\label{subsec:bcotangentlifts}

The cotangent bundle of a smooth manifold $M$ is naturally equipped with a symplectic structure, since there is always an intrinsic canonical linear form $\lambda$ on $T^{*}M$ defined by
\begin{equation*}
    \langle \lambda_p,v\rangle=\langle p,d\pi_pv\rangle, \hspace{25pt} p=(m,\xi)\in T^{*}M,v\in T_p(T^{*}M),
\end{equation*}
where $d\pi_p:T_p(T^{*}M)\longrightarrow T_mM$ is the differential of the canonical projection at $p$. In local coordinates $(q_i,p_i)$, the form is written as $\lambda=\sum_i p_i\,dq_i$ and is called the \textit{Liouville $1$-form}. Its differential $\omega=d\lambda=\sum_i dp _i\wedge dq_i$ is a symplectic form on $T^{*}M$.

For $b$-symplectic manifolds there are two natural choices for the singular Liouville form, each of them giving a different symplectic form $\omega$:

\begin{enumerate}
    \item Non-twisted forms: $\lambda=\frac{c}{q_1}p_1dq_1+\sum_{i=2}^n p_idq_i$ and $\omega=\frac{c}{q_1}dp_1\wedge dq_1 + \sum_{i=2}^n dp_i\wedge dq_i$,
    \item Twisted forms: $\lambda=c\log(p_1)dq_1+\sum_{i=2}^n p_idq_i$ and $\omega=\frac{c}{p_1}dp_1\wedge dq_1 + \sum_{i=2}^n dp_i\wedge dq_i$.
    \label{def:bsymplecticforms}
\end{enumerate}

The non-twisted, or canonical, symplectic form carries the singularity at the base (the transversal hypersurface $Z$ is given by $q_1=0$), while the twisted symplectic form carries the singularity at the fiber (the transversal hypersurface $Z$ is given by $p_1=0$). The constant $c$ in the expression of the forms is called the \textit{modular weight}.

The cotangent lift of a group action is defined in the following way.

\begin{definition}
Let $\rho:G\times M \longrightarrow M$ be a group action of a Lie group $G$ on a smooth manifold $M$. For each $g\in G$, there is an induced diffeomorphism $\rho_g:M\longrightarrow M$. The \textit{cotangent lift of $\rho_g$}, denoted by $\hat{\rho}_g$, is the diffeomorphism on $T^{*}M$ given by
\begin{equation*}
    \hat{\rho}_g(q,p):=(\rho_g(q),((d{\rho_g)}_q^{*})^{-1}(p)),\hspace{25pt}\text{with }(q,p)\in T^*M,
\end{equation*}
which makes the following diagram commute (see also Figure \ref{fig:cotangentlift}):
\begin{center}
\begin{tikzpicture}
  \matrix (m) [matrix of math nodes,row sep=4em,column sep=4em,minimum width=2em]
  {T^*M & T^*M \\
   M & M\\};
  \path[-stealth]
    (m-1-1) edge node [right] {$\pi$} (m-2-1)
            edge node [above] {$\hat{\rho}_g$} (m-1-2)
    (m-2-1) edge node [above] {${\rho}_g$} (m-2-2)
    (m-1-2) edge node [right] {$\pi$} (m-2-2);
\end{tikzpicture}
\end{center}
\end{definition}

Given a diffeomorphism $\rho:M\longrightarrow M$, its cotangent lift is defined in an analogous way and it preserves the Liouville $1$-form $\lambda$. As a consequence, it also preserves the symplectic form on $T^*M$. In the twisted case, the twisted $b$-cotangent lift preserves the twisted $1$-form $\lambda=c\log(p_1)dq_1+\sum_{i=2}^n p_idq_i$ and the twisted $b$-symplectic form $\omega=\frac{c}{p_1}dp_1\wedge dq_1 + \sum_{i=2}^n dp_i\wedge dq_i$.

\begin{figure}[ht!] 
\begin{tikzpicture}[scale=1.0]
\newcommand{\aplane}[2]{
	(#1, #2, 1.35) --
	++(3, 1, 0.0) --
	++(1, -1.8, -2.7) --
	++(-3, -1, -0.0) --
	cycle}
\newcommand{\bplane}[2]{
	(#1, #2, 0) --
	++(3, -1, 0.0) --
	++(1, -1, -3.5) --
	++(-3, 1, -0.0) --
	cycle}
\coordinate (adir) at (0.5, 0.5, 0.0);  
\coordinate (bdir) at (0.8, -0.15, 0);
\newcommand{\evec}[2]{#1 -- ++#2}
\newcommand{\qvec}[2]{#1 -- ++#2}
\coordinate (a) at (0.5,2,0);
\coordinate (ad) at (0.5,4,0);
\coordinate (b) at (6.5,2,0);
\coordinate (bd) at (6.5,4,0);
\coordinate (ab) at (3.5,1.5,0);
\coordinate (bvec) at (6.9, 1.93, 0);
\coordinate (avec) at (0.75,2.25,0);
\coordinate (bdvec) at (6.9, 3.93, 0);
\coordinate (advec) at (0.75,4.25,0);
\coordinate (abs) at (3.5,3,0);
\coordinate (abds) at (3.5,5,0);
\coordinate (A) at (0,0,0);
\coordinate (B) at (10,0,0);
\coordinate (C) at (9,3,1);
\coordinate (D) at (-3,2,1);
\node at (5,0.5,0) {$ $}; 
\node at (-0.4,1,0) {$ $}; 
\node at (9,1.2,0) {$ $}; 
\node at (-0.4,3,0) {$ $}; 
\node at (9,3.2,0) {$ $}; 
\draw[thick, color=black] (A) to [bend left=20] (B);
\draw[thick, color=black] (A) to [bend left=-20] (D);
\draw[thick, color=black] (D) to [bend left=20] (C);
\draw[thick, color=black] (C) to [bend left=20] (B);
\draw[fill=gray!10]\aplane{-1.3}{2.4}; 
\draw[fill=gray!10]\bplane{4}{2.3};
\draw[fill=blue!10]\aplane{-1.3}{4.4}; 
\draw[fill=blue!10]\bplane{4}{4.3};
\draw[->, thick]\evec{(a)}{(adir)} node[anchor=east] {$(q,p)$};
\draw[->, thick]\evec{(ad)}{(adir)} node[anchor=east] {$ $};
\draw[dotted] (a) -- (ad);
\draw[->, thick]\qvec{(b)}{(bdir)}; node[anchor=east] {$ $};
\draw[->, thick]\qvec{(bd)}{(bdir)} node[anchor=west] {$ $};
\draw[dotted] (b) -- (bd);
\draw[color=blue, <-] (b) to [bend left=10] (ab);
\draw[color=blue, <-] (ab) node[anchor=north] {$\rho_g$} to [bend left=10] (a);
\draw[color=red, <-] (bvec) to [bend left=-10] (abs);
\draw[color=red, <-] (abs) node[anchor=north] {$(d\rho_g)_q$} to [bend left=-10] (avec);
\draw[color=red, ->] (bdvec) to [bend left=-10] (abds);
\draw[color=red, ->] (abds) node[anchor=north] {$(d\rho_g)_q^*$} to [bend left=-10] (advec);	
\draw[fill] (b) circle[radius=2pt] node[anchor=east] {$ $};
\draw[fill] (bd) circle[radius=2pt] node[anchor=east] {};
\draw[fill] (a) circle[radius=2pt] node[anchor=east] {$ $};
\draw[fill] (ad) circle[radius=2pt] node[anchor=east] {};
\end{tikzpicture}
\caption{The cotangent lift of an action $\rho_g$ is a map on the cotangent bundle $T^*M$.}
\label{fig:cotangentlift}
\end{figure}

\begin{proposition}[Kiesenhofer-Miranda, \cite{KiesenhoferMiranda17}]
Given a group action $\rho:G\times M\to M$ on a smooth manifold of dimension $n$, the twisted $b$-cotangent lift $\hat\rho$ is $b$-Hamiltonian with moment map $\mu:T^*M \to \mathfrak{g}^*$ given by $$\langle\mu(p),X \rangle := \langle \lambda_p ,X^\#|_{p} \rangle =\langle p,X^\#|_{\pi(p)}\rangle,$$
where $p\in T^*M$, $X$ is an element of the Lie algebra $\mathfrak{g}$ and $X^\#$ denotes the fundamental vector field of $X$ generated by the action on $T^*M$.

Moreover, for a toric action, the moment map of the lifted action with respect to the twisted $b$-symplectic form $\omega=\frac{c}{p_1} d\theta_1\wedge dp_1+ \sum_{i=2}^n d\theta_i\wedge dp_i$ on $T^*M$ is given by $\mu=(c \log|p_1|,p_2, \dots,p_n)$.
\label{prop:cotangentlifthamiltonian}
\end{proposition}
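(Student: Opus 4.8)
The plan is to deduce the moment map from the invariance of the twisted Liouville form under the lift, exactly as in the classical cotangent case but carried out in the $b$-category. First I would record the structural fact stated just before the proposition: the twisted $b$-cotangent lift $\hat\rho_g$ preserves the twisted Liouville $1$-form $\lambda = c\log(p_1)\,dq_1 + \sum_{i=2}^n p_i\,dq_i$, so that $\hat\rho_g^*\lambda = \lambda$ for all $g\in G$. Differentiating in $g$ at the identity, every fundamental vector field $X^\#$ on $T^*M$ satisfies $\mathcal{L}_{X^\#}\lambda = 0$. Before invoking Cartan calculus I would check that $X^\#$ is a genuine $b$-vector field, i.e.\ that the lift is tangent to the critical hypersurface $Z = \{p_1 = 0\}$, so that the $b$-de Rham machinery applies.

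The core of the argument is Cartan's magic formula. Since $\omega = d\lambda$ and $\mathcal{L}_{X^\#}\lambda = 0$,
\begin{equation*}
0 = \mathcal{L}_{X^\#}\lambda = d(\iota_{X^\#}\lambda) + \iota_{X^\#}\,d\lambda = d(\iota_{X^\#}\lambda) + \iota_{X^\#}\omega,
\end{equation*}
so that $\iota_{X^\#}\omega = -d(\iota_{X^\#}\lambda)$. Under the sign convention $\iota_X\omega = -dH$ this says precisely that $X^\#$ is the $b$-Hamiltonian vector field of the $b$-function $\iota_{X^\#}\lambda$. I would then set $\langle\mu(p),X\rangle := \iota_{X^\#}\lambda|_p = \langle\lambda_p, X^\#|_p\rangle$. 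Matching this with the stated expression $\langle p, X^\#|_{\pi(p)}\rangle$ uses the defining property of the Liouville form together with $d\pi_p(X^\#|_p) = X^\#|_{\pi(p)}$ (the lift projects to the generator on the base); here one must take care that in the twisted fibre direction the momentum entering the pairing is the $b$-momentum coming from the $b$-cotangent structure. Linearity in $X$ is immediate, and equivariance follows from the $G$-invariance of $\lambda$ and the naturality of the lift, so $\mu$ is a genuine $b$-moment map.

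For the toric statement I would compute the contractions explicitly. The generators of the $\mathbb{T}^n$-action are the coordinate fields $\partial_{\theta_i}$, and contracting the twisted Liouville form gives $\iota_{\partial_{\theta_1}}\lambda = c\log|p_1|$ and $\iota_{\partial_{\theta_i}}\lambda = p_i$ for $i\ge 2$, whence $\mu = (c\log|p_1|, p_2, \dots, p_n)$. As a cross-check I would verify this directly against $\omega$: $\iota_{\partial_{\theta_1}}\omega = \tfrac{c}{p_1}\,dp_1 = d(c\log|p_1|)$ and $\iota_{\partial_{\theta_i}}\omega = dp_i$, confirming that each component is the Hamiltonian generating the corresponding circle action (up to the global sign convention).

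The main obstacle is controlling the singularity along $Z$. The first component $c\log|p_1|$ is not a smooth function but blows up on the critical set, so the whole argument must be phrased in terms of $b$-functions and the $b$-de Rham complex rather than ordinary forms; in particular one must confirm that the $b$-symplectic non-degeneracy of $\omega$ genuinely recovers $X^\#$ from $d(c\log|p_1|)$ across $Z$, and that the invariance $\hat\rho_g^*\lambda = \lambda$ persists on the hypersurface $p_1 = 0$ rather than only on $T^*M\setminus Z$. Once the $b$-Cartan calculus is justified on $Z$, the remainder of the proof is the same formal computation as in the smooth symplectic case.
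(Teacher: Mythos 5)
Your argument is correct and follows essentially the same route as the source of this result: the paper states the proposition without proof, importing it from \cite{KiesenhoferMiranda17}, and the argument there is precisely your combination of the invariance of the twisted Liouville form under the lift (recorded in the paper just before the proposition), Cartan's magic formula with $\omega = d\lambda$ to get $\iota_{X^\#}\omega = -d\bigl(\iota_{X^\#}\lambda\bigr)$, and the explicit contractions $\iota_{\partial_{\theta_1}}\lambda = c\log|p_1|$, $\iota_{\partial_{\theta_i}}\lambda = p_i$ in the toric case. Your caveat about the middle equality $\langle\lambda_p, X^\#|_p\rangle = \langle p, X^\#|_{\pi(p)}\rangle$ is also well placed: with the twisted form this pairing must be read through the $b$-cotangent identification (yielding $c\log|p_1|$ rather than $p_1$ in the singular slot), exactly as the toric formula $\mu=(c\log|p_1|,p_2,\dots,p_n)$ confirms.
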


\subsection{Cotangent models for integrable systems}
\label{subsec:cotangentmodels}
The dynamics of an integrable system $F=(f_1,\dots,f_n)$ defined in a manifold $M$ is explained by the classical Arnold-Liouville-Mineur Theorem at the regular points of $F$, namely, at the points of $M$ where the differential $dF=(df_1, \dots,df_n)$ is not singular. This theorem was restated by Kiesenhofer and Miranda in \cite{KiesenhoferMiranda17} revealing that at a semilocal level the regular leaves are equivalent to a completely toric cotangent lift model.

\begin{theorem}[Kiesenhofer-Miranda, \cite{KiesenhoferMiranda17}]
Let $F=(f_1,\dots,f_n)$ be an integrable system on a symplectic manifold $(M,\omega)$. Then, semilocally around a regular Liouville torus, the system is equivalent to the cotangent model $(T^* \mathbb{T}^n)_{can}$ restricted to a neighbourhood of the zero section $(T^* \mathbb{T}^n)_0$ of $T^* \mathbb{T}^n$.
\label{thm:cotangentliftALM}
\end{theorem}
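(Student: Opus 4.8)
The plan is to reduce the statement to the classical Arnold-Liouville-Mineur theorem and then to recognize the resulting normal form as precisely the canonical cotangent lift model on $T^*\mathbb{T}^n$. First I would exploit the integrability hypothesis directly: the components $f_1,\dots,f_n$ pairwise Poisson-commute, so their Hamiltonian vector fields $X_{f_1},\dots,X_{f_n}$ commute and are tangent to the common level sets of $F$. At a regular point these fields are linearly independent, so they span the tangent spaces of a Lagrangian foliation whose leaves are the connected components of the fibers $F^{-1}(c)$, and their commuting flows assemble into a locally free $\mathbb{R}^n$-action. Since we are working semilocally around a \emph{regular Liouville torus}, the relevant leaf is compact and connected, hence a single $n$-dimensional orbit diffeomorphic to $\mathbb{T}^n = \mathbb{R}^n/\Lambda$ for a period lattice $\Lambda$. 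This is the topological backbone of the theorem.

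Next I would construct action-angle coordinates. Choosing a smoothly varying basis of cycles $\gamma_1(c),\dots,\gamma_n(c)$ on the fibers and a local primitive $\lambda$ of $\omega$, I would define the action coordinates by
\begin{equation*}
I_j = \frac{1}{2\pi}\oint_{\gamma_j}\lambda,
\end{equation*}
which are functions of $F$ alone and whose Hamiltonian flows are $2\pi$-periodic. The conjugate angle coordinates $\theta_1,\dots,\theta_n$ are then the normalized time parameters of these flows along the torus. The key verification is that in the coordinates $(\theta,I)$ the symplectic form takes the canonical shape
\begin{equation*}
\omega = \sum_{i=1}^n dI_i \wedge d\theta_i,
\end{equation*}
matching the expression $\omega = \sum dp_i\wedge dq_i$ for the Liouville form recorded in Section \ref{subsec:bcotangentlifts}; this follows because the actions arise as periods of a primitive of $\omega$ and the angles are normalized flow parameters, so the two families are not merely Poisson-commuting but genuinely canonically conjugate.

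Finally I would identify this normal form with the cotangent lift model. The torus $\mathbb{T}^n$ acts on itself by translation in the angle variables, and this action lifts to $T^*\mathbb{T}^n$; applying the moment map formula $\langle\mu(p),X\rangle=\langle p,X^\#|_{\pi(p)}\rangle$ of Proposition \ref{prop:cotangentlifthamiltonian} to the fundamental fields $\partial/\partial\theta_i$, the moment map of this lifted toric action with respect to $\omega = \sum dp_i\wedge d\theta_i$ is exactly the fiber projection $(p_1,\dots,p_n)$. Matching $p_i$ with the action coordinates $I_i$ and the base coordinates with the angles $\theta_i$, the semilocal model $(\theta,I)$ becomes precisely the canonical cotangent model $(T^*\mathbb{T}^n)_{can}$ restricted to a neighbourhood of the zero section $\{I=0\}=(T^*\mathbb{T}^n)_0$, with $F$ corresponding to the standard integrals. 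The equivalence is thus realized by a symplectomorphism intertwining the two integrable systems.

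I expect the main obstacle to be the second step: producing angle coordinates that are globally well-defined on each torus and verifying that $(\theta,I)$ are simultaneously canonical rather than merely commuting. The difficulty lies in controlling how the period lattice $\Lambda(c)$ varies with the base point and in checking that the normalization of the periodic flows is compatible with the closedness of $\omega$. This is exactly the heart of the Arnold-Liouville construction, and it is where the choice of primitive $\lambda$ and the cohomological nature of the actions $I_j$ genuinely enter the argument.
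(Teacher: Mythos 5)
The paper states this theorem without proof, citing \cite{KiesenhoferMiranda17}, and your argument is precisely the one used there: the classical Arnold--Liouville--Mineur construction of action-angle coordinates, followed by the observation (as in Proposition \ref{prop:cotangentlifthamiltonian}) that the cotangent lift of the translation action of $\mathbb{T}^n$ on itself has moment map $(p_1,\dots,p_n)$, so the action-angle normal form $\omega=\sum_i dI_i\wedge d\theta_i$ \emph{is} the canonical cotangent model near the zero section. The only cosmetic point is that the actions $I_j$ need not vanish on the chosen Liouville torus, so one should subtract their constant values there before identifying $\{I=0\}$ with $(T^*\mathbb{T}^n)_0$ (a shift that clearly preserves $\omega$), and the equivalence sends $F$ to a function of the fiber coordinates $(p_1,\dots,p_n)$ alone, i.e.\ $F$ factors through the standard moment map via a local diffeomorphism rather than equaling it literally.
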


Cotangent lifts arise naturally in physical problems and the link between integrable systems and cotangent models is clear in view of Theorem \ref{thm:cotangentliftALM}. For the singular cases, and, in particular, for $b$-integrable systems, cotangent models can be made explicit in terms of action-angle coordinates. 

\begin{theorem}[Kiesenhofer-Miranda-Scott \cite{KiesenhoferMirandaScott16}]
Suppose $(M, Z, \omega, F)$ is a $b$-integrable system and let $m \in Z$ be a regular point of $F$ for which the integral manifold containing $m$ is compact, i.e. a Liouville torus. Then, there exists an open neighbourhood $U$ of the torus and coordinates $(\theta_1,\dots,\theta_n,p_1,\dots,p_{n}): U \to \mathbb T^n \times B^n$ such that
\begin{equation}
        \omega|_U =\sum_{i=1}^{n-1} dp_i \wedge d\theta_i  + \frac{c}{p_n} dp_n \wedge d\theta_n,
\end{equation}
where the coordinates $p_1,\dots,p_n$ depend only on $F$ and the constant $c$ is the modular weight of the component of $Z$ containing $m$.
\label{thm:b-actionangle}
\end{theorem}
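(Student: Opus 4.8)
The plan is to follow the strategy of the classical Arnold--Liouville--Mineur theorem, with the genuine work concentrated on the behaviour of the system as it crosses the critical hypersurface $Z$. Away from $Z$ the form $\omega$ is an honest symplectic form and $F$ is an honest integrable system, so the classical theorem already produces action--angle coordinates there; the content of the statement is that these extend across $Z=\{p_n=0\}$ in a controlled way, acquiring a logarithmic singularity in exactly one direction. I would therefore first set up the torus fibration and only afterwards analyse the singular action.

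First I would establish the $b$-analogue of the Liouville fibration. Since $m$ is a regular point of $F$, the components $f_1,\dots,f_n$ are functionally independent $b$-functions in Poisson involution, so their Hamiltonian $b$-vector fields $X_{f_1},\dots,X_{f_n}$ are everywhere independent, pairwise commuting, and tangent to the level set $L$ of $F$ through $m$. As $L$ is assumed compact, a Liouville-type argument (completeness of the commuting vector fields on a compact leaf, followed by identification of the stabiliser lattice) shows that their joint flow defines a locally free $\mathbb{T}^n$-action whose orbit through $m$ is $L$. This yields angle coordinates $\theta_1,\dots,\theta_n$ on a saturated neighbourhood $U$ and a diffeomorphism $U\cong\mathbb{T}^n\times B^n$ under which the fibres of $F$ are the tori $\mathbb{T}^n\times\{\ast\}$. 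The point to verify carefully is that one of the $X_{f_i}$ becomes, along $Z$, proportional to the modular vector field, which is tangent to $Z$; this is what guarantees that the torus action is well defined and smooth even across $Z$.

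Next I would construct the action coordinates $p_1,\dots,p_n$ as functions of $F$ alone. On $U\setminus Z$ the classical recipe sets $a_i=\tfrac{1}{2\pi}\oint_{\gamma_i}\lambda$, where $\gamma_1,\dots,\gamma_n$ is a basis of cycles of the torus and $\lambda$ is a primitive of $\omega$. The key difference in the $b$-setting is that $\omega$ is only a $b$-symplectic form, so I would take $\lambda$ to be a $b$-Liouville primitive, writing $\omega=d\lambda$ as $b$-forms on $U$ via semilocal exactness of $b$-symplectic forms. For the $n-1$ cycles that can be pushed off $Z$ the integrals depend smoothly on $F$ and produce ordinary actions $p_1,\dots,p_{n-1}$; for the remaining cycle, which links $\{p_n=0\}$, the integral of the singular part of $\lambda$ diverges logarithmically as one approaches $Z$. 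After renaming this action $p_n$, one finds that the associated integral of motion is of the form $c\log|p_n|$, with $c$ the modular weight of the component of $Z$; this is exactly the assertion that the $p_i$ depend only on $F$ and that the singular direction is logarithmic.

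Finally I would bring $\omega$ to the stated normal form by an equivariant $b$-Moser argument. Averaging over the $\mathbb{T}^n$-action makes $\omega$ torus-invariant without changing its $b$-cohomology class, and the Mazzeo--Melrose description of $b$-de Rham cohomology shows that the class of a $b$-symplectic form near a compact torus is determined by its restriction away from $Z$ together with the modular weight $c$, which is a locally constant invariant of the component of $Z$. Since the model form $\sum_{i=1}^{n-1}dp_i\wedge d\theta_i+\frac{c}{p_n}dp_n\wedge d\theta_n$ has the same restriction and the same modular weight, the two forms are joined by a path of $b$-symplectic forms of fixed class, and a Moser path integrates to a torus-equivariant $b$-diffeomorphism carrying one to the other while preserving the fibration of $F$. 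I expect the main obstacle to be precisely the behaviour across $Z$: proving that the singular cycle integral yields the coefficient $\frac{c}{p_n}$ with $c$ equal to the modular weight, and that the Moser flow remains a genuine $b$-vector field (tangent to $Z$), so that the normalisation does not destroy the singular structure.
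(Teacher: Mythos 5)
The paper itself does not prove Theorem \ref{thm:b-actionangle}: it quotes it from Kiesenhofer--Miranda--Scott \cite{KiesenhoferMirandaScott16}, so your proposal has to be measured against that proof. Your skeleton --- trivialize the torus fibration, build a $\mathbb{T}^n$-action, produce actions that are functions of $F$ with the singular one of the form $c\log|p_n|$ --- matches theirs up to the final normalization. At that point they follow the Poisson-geometric scheme of Laurent-Gengoux--Miranda--Vanhaecke: a $b$-Darboux--Carath\'eodory theorem plus a direct construction of the angle coordinates from a Lagrangian section and the flows of the $X_{p_i}$, with no Moser path and no cohomological classification. Your Moser/Mazzeo--Melrose ending is a genuinely different, more cohomological route, and it can be made to work, but only in the fibration-relative form: after averaging, the Moser primitive $\beta$ must be corrected by an exact invariant term so that it annihilates the tangent spaces of the Liouville tori (its periods on each torus vanish because the two forms have equal action integrals), which forces the Moser vector field to be tangent to the Lagrangian fibers. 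Equivariance alone maps orbits to orbits but not each torus to itself; without this correction the resulting diffeomorphism moves the fibers of $F$ and the conclusion that $p_1,\dots,p_n$ depend only on $F$ is lost. You flag this difficulty but give no mechanism, and it is the crux of your variant. Two smaller points: the joint flow of the $X_{f_i}$ gives a priori only an $\mathbb{R}^n$-action, and the $\mathbb{T}^n$-action exists only after uniformization of periods, i.e.\ after replacing $F$ by the action functions, so your first two steps must be intertwined as in \cite{KiesenhoferMirandaScott16}; and the identification of the coefficient with the modular weight of the component of $Z$ requires the mapping-torus structure of $Z$ (equivalently, a computation with the modular vector field), which you assert rather than prove.

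One claim is wrong as stated: a $b$-symplectic form is \emph{not} semilocally exact in the $b$-de Rham complex near the torus. By Mazzeo--Melrose, ${}^bH^2(U)\cong H^2(U)\oplus H^1(Z\cap U)$, and the singular part of $\omega$ has class $c[d\theta_n]\neq 0$ in $H^1(Z\cap U)$; this class is exactly the modular-weight datum you want to extract, so it cannot vanish. A primitive such as $c\log|p_n|\,d\theta_n$ exists only in the larger complex allowing $b$-function (logarithmic) coefficients --- precisely the twisted Liouville form of Section \ref{subsec:bcotangentlifts}, which is not a smooth section of $\Lambda^1({}^bT^*M)$. Your subsequent computation (the logarithmically divergent period on the cycle linking $Z$, renormalized to the coordinate $p_n$) implicitly works in that larger complex, so the step is repairable; but as written, ``semilocal exactness of $b$-symplectic forms'' is false, and the Mineur integrals should either be taken with a log-coefficient primitive or defined on $U\setminus Z$ and then analyzed across $Z$, which is how the singular action $c\log|p_n|$ genuinely arises.
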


\section{The twisted $b$-symplectic model for dissipation}
\label{section:bsymplecticmodel}

In this section, we describe how $b$-symplectic geometry offers a way to model, in a Hamiltonian fashion, a particle moving in a dissipative fluid with viscosity. In particular, we construct an example that uses the twisted $b$-symplectic form in the cotangent bundle of $\mathbb{R}$. This example gives precisely the equation of the friction drag force exerted on a small spherical particle moving through a viscous laminar fluid in one dimension, the so-called Stokes' Law. Then, we generalize this model to higher dimensions and to other configuration spaces different from $\mathbb{R}^n$.

Take $M=\mathbb{R}$ and $T^*M\cong \mathbb{R}^2$ with coordinates $(q,p)$. Consider the Hamiltonian
\begin{equation}
    H(q,p)=\frac{p^2}{2}+f(q),
\end{equation}
which corresponds to the energy of a massive particle subject to a potential $f(q)$. The Hamilton's equations derived from $\iota_{X_H}\omega=-dH$ with the standard symplectic form $\omega=dp\wedge dq$ provide the following system, which models the main toy models in classical mechanics:
\begin{equation}
    \begin{cases}
    \Dot{q}=p\\
    \Dot{p}=-\frac{\partial f}{\partial q}
    \end{cases}.
\label{eq:Hameqs2}
\end{equation}

But, more interestingly, the Hamilton's equations derived from $\iota_{X_H}\,\omega=-dH$ with the twisted $b$-symplectic form
$$\omega=\frac{1}{p}dp\wedge dq$$
are:
\begin{equation}
    \begin{cases}
    \Dot{q}=p^2\\
    \Dot{p}=-p\frac{\partial f}{\partial q}
    \end{cases}.
\label{eq:Hameqs}
\end{equation}

At $p=0$ there are just fixed points and system (\ref{eq:Hameqs}) gives no dynamics. Hence, we can reduce the dynamical study to $p>0$, and for $p<0$ it will be symmetric up to a change of sign.

Differentiating the first equation of system (\ref{eq:Hameqs}) and substituting into the second one, we find
\begin{equation}
    \ddot{q}=-2\dot{q}\frac{\partial f}{\partial q},
    \label{eq:secondorderODE}
\end{equation}
which is a second order ODE depending only on $q$. Notice that, although we have associated $q$ to the position coordinate, $\dot q$ is not equal to the standard physical momentum $p$ but to $p^2$. However, we can still think of $p=\sqrt{\dot{q}}$ as a modified physical momentum, since it is an increasing function of $\dot q$. Taking into account this point of view, we proceed to obtain various models of dynamics for different families of potentials $f(q)$.

\subsection{A new model of Hamiltonian dissipation. The Stokes' Law as a twisted model}

A natural choice for the potential $f(q)$ is a function of linear type. This simple model already gives an original way of considering dissipation as a $b$-symplectic model, as the following result proves.

\begin{theorem}[Dissipation as a twisted singular cotangent model]
Consider the twisted $b$-symplectic model in $T^*\mathbb{R}$, given by Equation (\ref{eq:Hameqs}). The particular case $f(q)=\frac{\lambda}{2}q$ corresponds to the model of a spherical particle moving in a fluid with viscosity and suffering a friction proportional to its velocity, i.e., to the Stokes' Law.
\label{th:dissipationisacotangentmodel}
\end{theorem}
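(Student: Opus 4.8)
The plan is to start from the Hamilton's equations (\ref{eq:Hameqs}) that the paper has already derived for the twisted $b$-symplectic form $\omega=\frac{1}{p}dp\wedge dq$, and simply specialize them to the linear potential $f(q)=\frac{\lambda}{2}q$. Since $\frac{\partial f}{\partial q}=\frac{\lambda}{2}$ is constant, the system becomes
\begin{equation*}
\begin{cases}
\dot{q}=p^2\\
\dot{p}=-\frac{\lambda}{2}\,p
\end{cases}.
\end{equation*}
The crux of the argument is to interpret these two equations physically and match them to Stokes' Law, rather than to solve them outright.

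First I would recall the physical content of Stokes' Law: a small spherical particle moving through a viscous laminar fluid experiences a drag force proportional and opposite to its velocity, so that Newton's second law for the velocity $v$ of the particle reads $\dot{v}=-\kappa v$ (absorbing constants such as the sphere radius, the dynamic viscosity and the mass into a single coefficient $\kappa$), possibly together with a constant forcing term. The goal is to show that the second equation of the specialized system reproduces exactly this law once we make the correct identification of variables.

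Next I would invoke the interpretation already set up in the paragraph following (\ref{eq:secondorderODE}), namely that $\dot{q}=p^2$ plays the role of velocity and $p=\sqrt{\dot{q}}$ is a \emph{modified physical momentum}. The key step is to translate $\dot{p}=-\frac{\lambda}{2}p$ into a statement about the genuine velocity $v=\dot{q}=p^2$. Differentiating $v=p^2$ gives $\dot{v}=2p\dot{p}=2p\left(-\frac{\lambda}{2}p\right)=-\lambda p^2=-\lambda v$, so the physical velocity obeys a linear dissipation law $\dot{v}=-\lambda v$. This is precisely the form of friction predicted by Stokes' Law, with the constant $\lambda$ identified as the Stokes drag coefficient; I would then record the explicit match between $\lambda$ and the physical parameters (viscosity, radius, mass) to complete the identification.

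The main obstacle I anticipate is not computational but conceptual: one must justify \emph{why} the correct physical velocity is $p^2$ rather than $p$, and argue that the singular structure at $p=0$ (where the $b$-symplectic form degenerates and, as noted, the dynamics has only fixed points) is consistent with the vanishing-velocity limit of a dissipative particle asymptotically coming to rest. I would therefore devote care to explaining that the critical hypersurface $Z=\{p=0\}$ corresponds to the equilibrium state the particle decays towards, which both confirms the physical reasonableness of the identification and connects the example to the escape-orbit and singular-periodic-orbit phenomena announced in the introduction. The remaining bookkeeping—matching constants and confirming the direction of the force—is routine once this interpretation is in place.
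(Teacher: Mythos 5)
Your proposal is correct and follows essentially the same route as the paper: you specialize system (\ref{eq:Hameqs}) to $f(q)=\frac{\lambda}{2}q$, and your computation $\dot{v}=2p\dot{p}=-\lambda p^2=-\lambda v$ with $v=\dot{q}=p^2$ is exactly the paper's reduction to the second-order ODE $\ddot{q}=-\lambda\dot{q}$ (Equation (\ref{eq:secondorderODE2})), followed by the same identification of $\dot{q}$ with the physical velocity in Stokes' Law (\ref{eq:StokesLaw}). Your additional remarks on the critical hypersurface $\{p=0\}$ as the unreachable equilibrium state mirror the paper's own discussion following the proof, so nothing is missing.
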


\begin{proof}
Consider $f(q)=\frac{\lambda}{2}q$, with $\lambda>0$, in the case of the Hamilton's equations coming from the twisted $b$-symplectic form, namely, in system (\ref{eq:Hameqs}). Explicitly, Hamilton's equations are
\begin{equation}
    \begin{cases}
    \Dot{q}=p^2\\
    \Dot{p}=-\frac{\lambda}{2}p
    \end{cases}.
\label{eq:Hameqsf}
\end{equation}

The corresponding second order ODE becomes

\begin{equation}
    \ddot{q}=-\lambda\dot{q},
    \label{eq:secondorderODE2}
\end{equation}

which corresponds exactly to the equation of a free massive particle moving in one dimension and affected by viscous friction. In fact, the Stokes' Law (\ref{eq:StokesLaw}) describes precisely the same case, which appears in the study of non-ideal fluids. It states that the frictional force $F$ is:

\begin{equation}
    F=6\pi \mu R v,
    \label{eq:StokesLaw}
\end{equation}

\noindent where $\mu$ is the dynamic viscosity, $R$ is the radius of the particle and $v$ is the flow velocity relative to the object (or minus the object velocity relative to the flow). The Stokes' Law computes the magnitude of the drag force that is acting against the particle motion and slowing it. This force is proportional to the velocity of the particle with respect to the fluid and of opposite direction.

Denoting the velocity $v$ by $\dot{q}$, assuming that the force $F$ is proportional to the acceleration $\ddot{q}$ and combining physical constants, we deduce that Equation (\ref{eq:secondorderODE2}) is equivalent to the Stokes' Law.
\end{proof}

\begin{remark}
In the classical symplectic setting, the particular case $f(q)=\frac{\lambda}{2}q$ in Equation (\ref{eq:Hameqs}), with $\lambda>0$, gives rise to the dynamics of a rectilinear motion with constant acceleration (of $\frac{\lambda}{2}$). It is, for instance, the model for the free fall of a particle subject to a one-dimensional constant gravity field. Notice that there is no loss of energy of the system.
\end{remark}

\subsubsection{Description of the dynamics}

From the point of view of dynamical systems, the phase portrait in the $(q,p)$-plane of system (\ref{eq:Hameqs}) is highly similar to the phase portrait of the standard system (\ref{eq:Hameqs2}), since the vectors $(\Dot{q},\Dot{p})$ of the two systems are proportional by a $p$ factor at each point. The main difference between both systems is found at the horizontal axis $p=0$. There, the orbits that where crossing transversally in the classical system (\ref{eq:Hameqs2}) are "broken" and new punctual orbits appear in the twisted $b$-symplectic system (\ref{eq:Hameqs}). Besides, orbits in the lower plane $p<0$ change direction in system (\ref{eq:Hameqs}) (see Figure \ref{fig:phasespace} for the phase space representation of both systems).

\begin{figure}[ht!]
\centering
\begin{tikzpicture}
    \begin{axis}[axis lines = middle,xlabel = $q$,ylabel = $p$,ticks=none,xmin=-3.5,xmax=3.5,ymin=-2,ymax=2]
    \foreach \i in {-3,...,7}{
    \addplot[domain=2:1,samples=20,color=blue,->]({-x^2+\i},{x});
    \addplot[domain=1:-1,samples=20,color=blue,->]({-x^2+\i},{x});
    \addplot[domain=-1:-2,samples=20,color=blue]({-x^2+\i},{x});}
    \end{axis}
\end{tikzpicture}
\hspace{20pt}
\begin{tikzpicture}
    \begin{axis}[axis lines = middle,xlabel = $q$,ylabel = $p$,ticks=none,xmin=-3.5,xmax=3.5,ymin=-2,ymax=2]
    \foreach \i in {-3,...,7}{
    \addplot[domain=2:1,samples=20,color=blue,->]({-x^2+\i},{x});
    \addplot[domain=1:-1,samples=20,color=blue]({-x^2+\i},{x});
    \addplot[domain=-2:-1,samples=20,color=blue,->]({-x^2+\i},{x});
    \addplot[color=blue,mark=*,fill=white] coordinates{(\i,0)};}
    \end{axis}
\end{tikzpicture}
\caption{Some orbits in the phase spaces of system (\ref{eq:Hameqs2}) on the left and system (\ref{eq:Hameqs}) on the right for $f(q)=\frac{\lambda}{2}q$.}
\label{fig:phasespace}
\end{figure}

The dynamical evolution of a physical system driven by the Hamiltonian $H(q,p)=\frac{1}{2}p^2+\frac{\lambda}{2}q$ and the standard symplectic form $\omega=dp\wedge dq$ is really different from the dynamical evolution of a physical system governed by the same Hamiltonian but taking the twisted $b$-symplectic form $\omega=\frac{1}{p}dp\wedge dq$.

In the standard case, orbits are parabolas of the form $q=-p^2+c$, with $c$ a constant, everywhere (see the phase portrait on the left of Figure \ref{fig:phasespace}). The trajectory of a particle in this system is unbounded and, for any initial conditions, $q,p \to_{t\to\infty} -\infty$. This is the model of a massive particle moving in an infinite one-dimensional well, subject to a constant force field and with no friction.

In the twisted $b$-symplectic the orbits are of two types. On the one hand, there are fixed points for $p=0$ and any $q$. On the other hand, there are half-parabolas of the same form $q=-p^2+c$ at each side of the horizontal axis $p=0$. The evolution of a particle starting either in the upper or in the lower plane is similar: in both cases it will approach asymptotically $p=0$ and a fixed $q=c$ greater than the initial $q$. Nevertheless, at a finite time, a particle will be found at $p=0$ if and only if it already started there. This has physical sense, since the force is acting proportionally to the speed of the particle and in the opposite direction. Then, a particle with non-zero initial velocity will be permanently slowed down, but it will never completely stop because the acting force will also decrease in correspondence.

The nature of the trajectories in both systems is also very different. In Figure \ref{fig:trajectories2} we can see some trajectories corresponding to both the classical Hamilton's equations and the twisted $b$-symplectic Hamilton's equations coming from the same Hamiltonian $H(q,p)=\frac{1}{2}p^2+\frac{\lambda}{2}q$.

\begin{figure}[ht!]
\centering
\begin{tikzpicture}
    \begin{axis}[axis lines = middle,xlabel = $t$,ylabel = $q$,ticks=none,xmin=-0.1,xmax=3,ymin=-5,ymax=5]
    \addplot[domain=0:3,samples=20,color=blue]({x},{-x^2+2});
    \addplot[domain=0:3,samples=20,color=red]({x},{-x^2+x+3.5});
    \addplot[domain=0:3,samples=20,color=green]({x},{-x^2+2*x+0.5});
    \addplot[domain=0:3,samples=20,color=violet]({x},{-x^2-2*x-1.5});
    \addplot[domain=0:3,samples=20,color=orange]({x},{-x^2+4*x-3.5});
    \end{axis}
\end{tikzpicture}
\hspace{20pt}
\begin{tikzpicture}
    \begin{axis}[axis lines = middle,xlabel = $t$,ylabel = $q$,ticks=none,xmin=-0.2,xmax=6,ymin=-5,ymax=5]
    \addplot[domain=0:6,samples=30,color=blue]({x},{-4*exp(-x)+1});
    \addplot[domain=0:6,samples=30,color=red]({x},{-2*exp(-x)+1});
    \addplot[domain=0:6,samples=30,color=green]({x},{-1*exp(-x)+2});
    \addplot[domain=0:6,samples=30,color=violet]({x},{-2*exp(-x)-2});
    \addplot[domain=0:6,samples=30,color=orange]({x},{-3*exp(-x)+3});
    \end{axis}
\end{tikzpicture}
\begin{tikzpicture}
    \begin{axis}[axis lines = middle,xlabel = $t$,ylabel = $p$,ticks=none,xmin=-0.1,xmax=3,ymin=-5,ymax=5]
    \addplot[domain=0:3,samples=20,color=blue]({x},{-2*x});
    \addplot[domain=0:3,samples=20,color=red]({x},{-2*x+1});
    \addplot[domain=0:3,samples=20,color=green]({x},{-2*x+2});
    \addplot[domain=0:3,samples=20,color=violet]({x},{-2*x-2});
    \addplot[domain=0:3,samples=20,color=orange]({x},{-2*x+4});
    \end{axis}
\end{tikzpicture}
\hspace{20pt}
\begin{tikzpicture}
    \begin{axis}[axis lines = middle,xlabel = $t$,ylabel = $p$,ticks=none,xmin=-0.1,xmax=5,ymin=-5,ymax=5]
    \addplot[domain=0:5,samples=30,color=blue]({x},{+4*exp(-x)});
    \addplot[domain=0:5,samples=30,color=red]({x},{+2*exp(-x)});
    \addplot[domain=0:5,samples=30,color=green]({x},{-1*exp(-x)});
    \addplot[domain=0:5,samples=30,color=violet]({x},{-2*exp(-x)});
    \addplot[domain=0:5,samples=30,color=orange]({x},{-3*exp(-x)});
    \end{axis}
\end{tikzpicture}
\caption{On the left, some trajectories $q(t)$ and $p(t)$ given by the classical Hamilton's Equations (\ref{eq:Hameqs2}). On the right, some trajectories $q(t)$ and $p(t)$ given by the twisted $b$-symplectic Hamilton's Equations (\ref{eq:Hameqs}). Both are for a potential $f(q)=\frac{\lambda}{2}q$.}
\label{fig:trajectories2}
\end{figure}

The trajectory $q(t)$ of a particle under the classical model of system (\ref{eq:Hameqs2}) is of the form $q(t)=-\frac{\lambda}{4}t^2+c_1t+c_0$. It depends on the constants $c_0,c_1$ (equivalently, on the starting $q$ and $p$) but, for any initial conditions, $q(t),p(t) \to_{t\to\infty} -\infty$. This corresponds to the aforementioned one-dimensional "free fall" of a particle in a constant force field.

On the other hand, the trajectory $q(t)$ of a particle under the twisted model of system (\ref{eq:Hameqs}) is of the form $q(t)=d_0-\frac{d_1^2}{\lambda}e^{-\lambda t}$. Hence, the particle's trajectory is bounded and has a limit at a fixed $q=d_0$ greater or equal than the initial $q(0)$, no matter which initial conditions are chosen.

The orbits that "break" at the horizontal axis can be identified with "escape orbits" of a $b$-symplectic manifold. In the recent article \cite{MirandaOmsPeralta-Salas22} the existence of escape orbits is investigated connected to the singular Weinstein conjecture for singular contact manifolds which had been conjectured in \cite{MirandaOms21}. The phenomena we see in the twisted model is the even-dimensional analogue of escape orbits in \cite{MirandaOmsPeralta-Salas22} as there the system is the induced system on a level set of the Hamiltonian (which is an odd-dimensional manifold endowed with a singular contact structure, see Lemma 8.6 and Proposition 8.8 in \cite{MirandaOms18}). The outstanding fact is that in the twisted model there exists an infinite number of escape orbits going to the critical set, so Lemma 8.6 and Proposition 8.8 in \cite{MirandaOms18} provide a machinery to produce examples of (singular) Reeb vector fields with infinite singular periodic orbits. In \cite{MirandaOmsPeralta-Salas22} the existence of a lower bound is established for compact $3$-dimensional manifolds.

What we have observed is exceptional because friction is a non-conservative force and, while it cannot be described by the usual basic Hamiltonian setup, it can be described using the twisted $b$-symplectic setting. In this setting, the critical hyper-surface in our example corresponds to zero velocity or momentum. This is physically consistent with the fact that viscous friction alone cannot bring a particle to zero velocity in finite time.

\subsection{The higher-dimensional linear case}

We have introduced the one-dimensional model of the Stokes' Law using the twisted $b$-symplectic setting and now it is natural to consider higher-dimensional models. The most direct generalization is to extend the particle's Hamiltonian to $T^*\mathbb{R}^n$ in the following way:

$$H(q_1,\dots,q_n,p_1,\dots,p_n)=\frac{1}{2}\sum_{i=1}^np_i^2+\frac{\lambda}{2}q_1.$$

The dynamics governed by this Hamiltonian together with the twisted $b$-symplectic form \begin{equation}
    \omega=c\frac{1}{p_1}dp_1\wedge dq_1+\sum_{i=2}^n dp_i\wedge dq_i
    \label{eq:twistedbsymplecticformgeneral}
\end{equation}
are the following. In the direction of $q_1$, the particle behaves by the Stokes' Law: it suffers dissipation and the corresponding velocity component tends to zero. In the other directions, the motion corresponds to that of a free particle. As a consequence, the evolution of the trajectory is a curve that starts with some initial direction given by a velocity $(v_1,\dots,v_n)$ in $\mathbb{R}^n$ and tends to a motion restricted to the direction $(0,v_2,\dots,v_n)$. However, this first generalization does not allow to consider friction in all directions. We shall see in Section \ref{section:timedependentmodels} how to tackle this problem.

The modular weight $c$ appearing in the twisted $b$-symplectic form \ref{eq:twistedbsymplecticformgeneral} is giving a measure of the predominance of the singular term over the regular terms. In a way, the modular weight is also a measure of the relative importance of the direction in which there is dissipative friction with respect to the rest of directions.

\begin{definition}[Reynolds number]
In the fluid context, the Reynolds number is the ratio of the inertial force to the viscous force. It is defined as
$$Re=\frac{\rho v d}{\mu},$$
where $\rho$ is the density of the fluid, $v$ the velocity of the fluid, $d$ the diameter or characteristic length of the system and $\mu$ the dynamic viscosity of the fluid.
\end{definition}

The Reynolds number quantifies the relative importance of advective nonlinearity and viscosity, with lower Reynolds number meaning viscous forces are dominant. In practice, it is used to determine whether a fluid exhibits laminar or turbulent flow. The Stokes' Law (\ref{eq:StokesLaw}) is obtained by solving the axisymmetric and stationary incompressible Navier–Stokes equations disregarding the nonlinear term. Accordingly, the Stokes' Law describes a fluid flow with an spherical object in the laminar regime, strictly speaking when the Reynolds number is $0$, although it is a good approximation when the Reynolds number is small enough. Analogously, we have the following:

\begin{remark}
The modular weight $c$ gives a measure of the importance of the dissipative direction compared to the other directions in which there is free motion. Then, it can be associated with an analogue of the Reynolds number $Re$ when $Re\approx 0$. A high modular weight $c$ implies that there is a big influence of the dissipation by viscosity in the overall system, which is equivalent to a low $Re$.
\end{remark}

\subsection{The pure quadratic potential}
Consider now a quadratic potential of the type $f(q)=\frac{\lambda}{4}q^2$. The dynamics of a physical system driven by the Hamiltonian $H(q,p)=\frac{1}{2}p^2+\frac{\lambda}{4}q^2$ and the standard symplectic form $\omega=dp\wedge dq$ corresponds to a simple harmonic oscillator. Explicitly, the Hamilton's equations in this case are:

\begin{equation}
    \begin{cases}
    \Dot{q}=p\\
    \Dot{p}=-\frac{\lambda}{2}q
    \end{cases}.
\label{eq:HameqsSHM}
\end{equation}

Orbits in the phase space of system (\ref{eq:HameqsSHM}) are circles everywhere except from a fixed point at the origin (see the phase portrait on the left of Figure \ref{fig:phasespacequad}). They are of the form $p^2+\frac{\lambda}{2}q^2=c$, with $c$ a constant. The position of a particle in this system is bounded and so is its velocity for any initial conditions, since $q(t)$ and $p(t)$ are sine waves. This behaviour corresponds exactly to the classical model of a simple harmonic oscillator, which is natural for a quadratic potential.

However, and more interestingly, the same Hamiltonian together with the twisted $b$-symplectic form $\omega=\frac{1}{p}dp\wedge dq$ gives another dynamics. In particular, we obtain the following equations of motion:
\begin{equation}
    \begin{cases}
    \Dot{q}=p^2\\
    \Dot{p}=-\frac{\lambda}{2}pq
    \end{cases}.
\label{eq:Hameqsquad}
\end{equation}

On the right of Figure \ref{fig:phasespacequad} we can see the phase space representation of the orbits of system (\ref{eq:Hameqsquad}) and on the left of Figure \ref{fig:trajectories3} we can see some trajectories $q(t)$ and $p(t)$ of the same system.

\begin{figure}[ht!]
\centering
\begin{tikzpicture}
    \begin{axis}[axis lines = middle,xlabel = $q$,ylabel = $p$,ticks=none,xmin=-3.5,xmax=3.5,ymin=-2,ymax=2]
    \foreach \i in {1,...,7}{
    \addplot[domain=135:-45,samples=30,color=blue,->]({\i*cos(x)},{\i/2*sin(x)});
    \addplot[domain=315:135,samples=30,color=blue,->]({\i*cos(x)},{\i/2*sin(x)});}
    \end{axis}
\end{tikzpicture}
\hspace{20pt}
\begin{tikzpicture}
    \begin{axis}[axis lines = middle,xlabel = $q$,ylabel = $p$,ticks=none,xmin=-3.5,xmax=3.5,ymin=-2,ymax=2]
    \foreach \i in {1,...,7}{
    \addplot[domain=180:135,samples=30,color=blue,->]({\i*cos(x)},{\i/2*sin(x)});
    \addplot[domain=-45:135,samples=30,color=blue]({\i*cos(x)},{\i/2*sin(x)});
    \addplot[domain=180:315,samples=30,color=blue,->]({\i*cos(x)},{\i/2*sin(x)});
    \addplot[color=blue,mark=*,fill=white] coordinates{(\i-4,0)};}
    \end{axis}
\end{tikzpicture}
\caption{Some orbits in the phase spaces of system (\ref{eq:HameqsSHM}) on the left and system (\ref{eq:Hameqsquad}) on the right for $f(q)=\frac{\lambda}{4}q^2$.}
\label{fig:phasespacequad}
\end{figure}
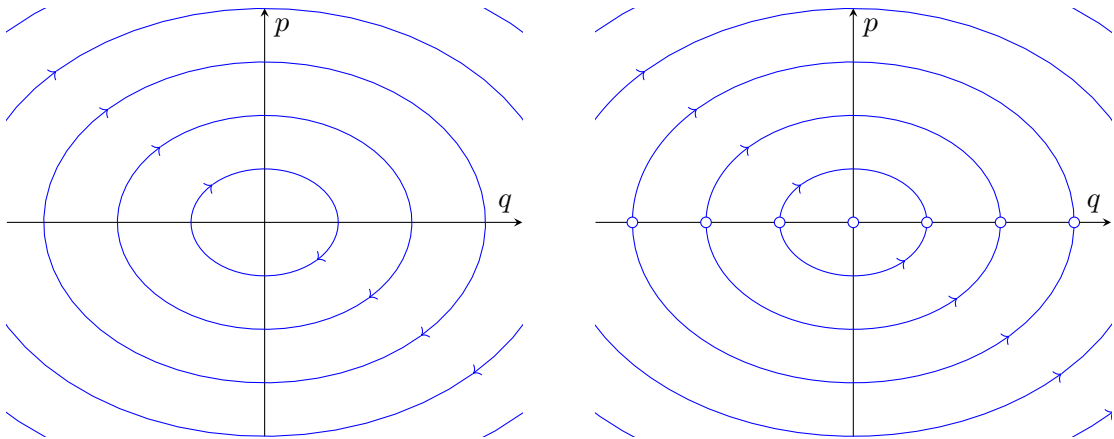

The second order ODE equivalent to system (\ref{eq:Hameqsquad}) is:
\begin{equation}
    \ddot{q}=-\lambda\dot{q}q.
    \label{eq:secondorderODEquad}
\end{equation}

Equation (\ref{eq:secondorderODEquad}) is a highly non-linear equation which has the following solution for the trajectory $q(t)$:
$$q(t)=\frac{c_1}{\sqrt{\lambda}}\tanh{\left(\frac{c_1\sqrt{\lambda}}{2}t+c_2\right)},$$
with $c_1$ and $c_2$ depending on the initial conditions. On the right of Figure \ref{fig:trajectories3}, we can see some trajectories $q(t)$ and $p(t)$ for different values of $c_1$ and $c_2$.

Again, as in the linear case, $p(t) \to_{t\to\pm\infty} 0$. On the other hand, the position $q(t)$ of a particle under this potential is bounded on the range $(-c_1,c_1)$, which makes the dynamics really different from the linear potential case.

In this setting, we can think that the particle is really enclosed in a uni-dimensional container and goes from one end to the other as time passes. It does so by starting to separate slowly from one border, then accelerating fast to pass over the mid space of the container, and then slowing again when arriving to the other border.

The quadratic potential, then, models a particle crossing the interior of a box at a slow speed when it is near each edge and at a high speed in the middle. Observe that the orbits in the twisted model "break" again like in the linear twisted case, allowing an infinite number of "escape orbits". This example has a bonus, as the escape orbits here correspond exactly to genuine \emph{singular periodic orbits} as the ones described in \cite{MirandaOms21}. These singular periodic orbits are indeed the union of $4$ different trajectories: two symmetric hetero-clinic half-circles and the two fixed points on the horizontal axis at their ends.

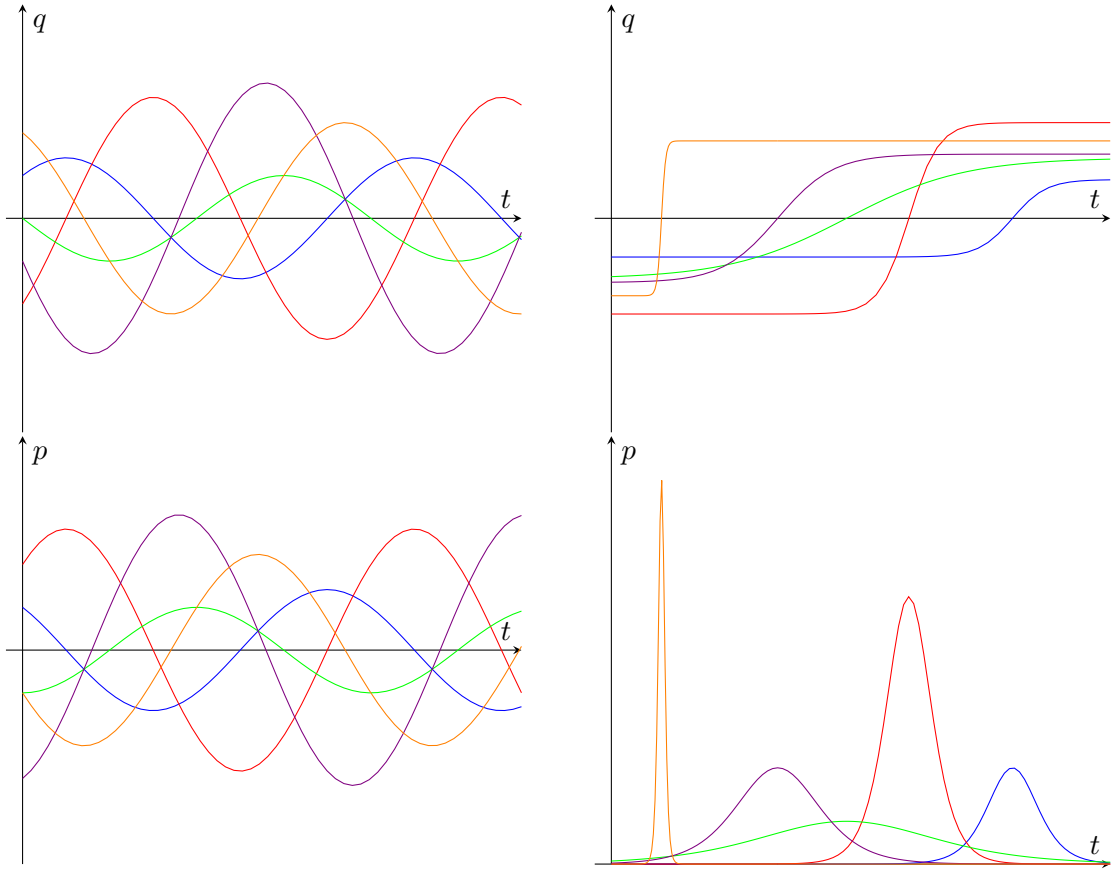
\begin{figure}[ht!]
\centering
\begin{tikzpicture}
    \begin{axis}[axis lines = middle,xlabel = $t$,ylabel = $q$,ticks=none,xmin=-0.1,xmax=3,ymin=-5,ymax=5]
    \addplot[domain=0:3,samples=60,color=blue]({x},{+1*sin(x*540/pi)+1*cos(x*540/pi)});
    \addplot[domain=0:3,samples=60,color=red]({x},{+2*sin(x*540/pi)-2*cos(x*540/pi)});
    \addplot[domain=0:3,samples=60,color=green]({x},{-1*sin(x*540/pi)+0*cos(x*540/pi)});
    \addplot[domain=0:3,samples=60,color=violet]({x},{-3*sin(x*540/pi)-1*cos(x*540/pi)});
    \addplot[domain=0:3,samples=60,color=orange]({x},{-1*sin(x*540/pi)+2*cos(x*540/pi)});
    \end{axis}
\end{tikzpicture}
\hspace{20pt}
\begin{tikzpicture}
    \begin{axis}[axis lines = middle,xlabel = $t$,ylabel = $q$,ticks=none,xmin=-0.1,xmax=3,ymin=-5,ymax=5]
    \addplot[domain=0:3,samples=50,color=blue]({x},{3/sqrt(11)*tanh(3*sqrt(11)/2*x-12)});
    \addplot[domain=0:3,samples=50,color=red]({x},{5/sqrt(5)*tanh(5*sqrt(5)/2*x-10)});
    \addplot[domain=0:3,samples=50,color=green]({x},{2/sqrt(2)*tanh(2*sqrt(2)/2*x-2)});
    \addplot[domain=0:3,samples=50,color=violet]({x},{3/sqrt(4)*tanh(3*sqrt(4)/2*x-3)});
    \addplot[domain=0:1,samples=100,color=orange]({x},{12/sqrt(44)*tanh(12*sqrt(44)/2*x-12)});\addplot[domain=1:3,samples=10,color=orange]({x},{12/sqrt(44)*tanh(12*sqrt(44)/2*x-12)});
    \end{axis}
\end{tikzpicture}
\begin{tikzpicture}
    \begin{axis}[axis lines = middle,xlabel = $t$,ylabel = $p$,ticks=none,xmin=-0.1,xmax=3,ymin=-5,ymax=5]
    \addplot[domain=0:3,samples=60,color=blue]({x},{+1*cos(x*540/pi)-1*sin(x*540/pi)});
    \addplot[domain=0:3,samples=60,color=red]({x},{+2*cos(x*540/pi)+2*sin(x*540/pi)});
    \addplot[domain=0:3,samples=60,color=green]({x},{-1*cos(x*540/pi)-0*sin(x*540/pi)});
    \addplot[domain=0:3,samples=60,color=violet]({x},{-3*cos(x*540/pi)+1*sin(x*540/pi)});
    \addplot[domain=0:3,samples=60,color=orange]({x},{-1*cos(x*540/pi)-2*sin(x*540/pi)});
    \end{axis}
\end{tikzpicture}
\hspace{20pt}
\begin{tikzpicture}
    \begin{axis}[axis lines = middle,xlabel = $t$,ylabel = $p$,ticks=none,xmin=-0.1,xmax=3,ymin=-0,ymax=10]
    \addplot[domain=0:3,samples=100,color=blue]({x},{3^2/2/(cosh(3*sqrt(11)/2*x-12))^2/2});
    \addplot[domain=0:3,samples=100,color=red]({x},{5^2/2/(cosh(5*sqrt(5)/2*x-10))^2/2});
    \addplot[domain=0:3,samples=100,color=green]({x},{2^2/2/(cosh(2*sqrt(2)/2*x-2))^2/2});
    \addplot[domain=0:3,samples=100,color=violet]({x},{3^2/2/(cosh(3*sqrt(4)/2*x-3))^2/2});
    \addplot[domain=0:1,samples=100,color=orange]({x},{6^2/2/(cosh(12*sqrt(44)/2*x-12))^2/2});\addplot[domain=1:3,samples=100,color=orange]({x},{12^2/2/(cosh(12*sqrt(44)/2*x-12))^2/2});
    \end{axis}
\end{tikzpicture}
\caption{On the left, some trajectories $q(t)$ and $p(t)$ given by the classical Hamilton's Equations (\ref{eq:HameqsSHM}). On the right, some trajectories $q(t)$ and $p(t)$ given by the twisted $b$-symplectic Hamilton's Equations (\ref{eq:Hameqsquad}). Both are for a potential $f(q)=\frac{\lambda}{4}q^2$.}
\label{fig:trajectories3}
\end{figure}

\subsection{The general quadratic potential}

It is natural to consider the coupling of the pure quadratic potential with the linear potential studied previously. Consider a physical particle moving in a viscous fluid and obeying the Stokes Law. Suppose that the fluid has a non-uniform viscosity $\eta$, which is, for instance, the case whenever there is a gradient of temperature (as the viscosity usually depends on the temperature). For small fluctuations, up to first order in position $q$, the viscosity can be written as $\eta=\eta_0(1+\alpha q)$. Therefore, the potential $f(q)$ accounting for the drag coefficient becomes $f(q)=\lambda(1+\alpha q)$ and the associated Hamiltonian is $$H(p,q)=\frac{p^2}{2}+\frac{\lambda}{2}q\left(1+\alpha \frac{q}{2}\right).$$

The corresponding second order ODE is
$$\ddot{q}=-\lambda(1+\alpha q) \dot{q},$$
and includes both the linear regime (which is expected to be dominant) and the quadratic regime as a perturbation. It is the most natural generalization of the linear regime from the physical point of view.

Another option is to consider dissipation in one direction of motion in space configurations which are not $\mathbb{R}^n$, such as a particle moving over a cylinder $S^1\times \mathbb{R}$. In this case, if the potential is linear (with respect to the axial coordinate), the dynamics is the one depicted in Figure \ref{fig:trajectoriescylinder}. There, the trajectory of a particle under the Hamiltonian $$H(\theta,q,p_\theta,p_q)=\frac{p_\theta^2+p_q^2}{2}+\frac{\lambda}{2}q,$$
together with the twisted $b$-symplectic form $\omega=c\frac{1}{p_q}dp_q\wedge dq+dp_\theta\wedge d\theta$ tends to a periodic orbit around the cylinder. In the cylinder $S^1\times \mathbb{R}$ it also makes sense to consider a potential which is function of the angular coordinate, which is what we do next.

\begin{figure}[ht!]
\centering
\begin{tikzpicture}[x=15mm,y=cos(20)*15mm,z={(0,-sin(20)*15mm)},rotate=270,transform shape]
\def\cylrad{1}
\def\cylht{8}
\draw
    (-\cylrad, \cylht*1.2) -- (-\cylrad, \cylht/2) --
    plot[smooth, samples=25, variable=\t, domain=180:360]
      ({cos(\t)*\cylrad}, \cylht/2, {-sin(\t)*\cylrad}) --
    (\cylrad, \cylht*1.2)
    plot[smooth cycle, samples=51, variable=\t, domain=0:360]
      ({cos(\t)*\cylrad}, \cylht*1.2, {-sin(\t)*\cylrad});
\draw[densely dashed]
    plot[smooth, samples=9, variable=\t, domain=0:180]
      ({cos(\t)*\cylrad}, \cylht/2, {-sin(\t)*\cylrad});
\draw[semithick,color=blue,->]
    plot[smooth, samples=25, variable=\t, domain=180:260]
        ({cos(\t)*\cylrad}, {\cylht*(1-1/(2^(1))) + (\t-180)*\cylht/720/(2^(1))}, {-sin(\t)*\cylrad});
\draw[semithick,color=blue]
    plot[smooth, samples=25, variable=\t, domain=260:360]
        ({cos(\t)*\cylrad}, {\cylht*(1-1/(2^(1))) + (\t-180)*\cylht/720/(2^(1))}, {-sin(\t)*\cylrad});
\draw[semithick,color=blue]
    \foreach \y in {2, 3, 4, 5, 6, 7} {
      plot[smooth, samples=25, variable=\t, domain=180:360]
        ({cos(\t)*\cylrad}, {\cylht*(1-1/(2^(\y))) + (\t-180)*\cylht/720/(2^(\y))}, {-sin(\t)*\cylrad})};
\draw[semithick, densely dashed,color=blue]
    \foreach \y in {1, 3, 4, 5, 6, 7} {
      plot[smooth, samples=25, variable=\t, domain=0:180]
        ({cos(\t)*\cylrad}, {\cylht*(2^(\y+2)-3)/(2^(\y+2)) + \t*\cylht/720/(2^(\y))}, {-sin(\t)*\cylrad})};
\draw[semithick, densely dashed,color=blue,->]
      plot[smooth, samples=25, variable=\t, domain=0:90]
        ({cos(\t)*\cylrad}, {\cylht*(2^(2+2)-3)/(2^(2+2)) + \t*\cylht/720/(2^(2))}, {-sin(\t)*\cylrad});
\draw[semithick, densely dashed,color=blue]
      plot[smooth, samples=25, variable=\t, domain=90:180]
        ({cos(\t)*\cylrad}, {\cylht*(2^(2+2)-3)/(2^(2+2)) + \t*\cylht/720/(2^(2))}, {-sin(\t)*\cylrad});  
\end{tikzpicture}
\begin{tikzpicture}[x=10mm,y=10mm]
\node[rectangle,draw=white,minimum width = 10cm,minimum height = 0.3cm] (r) at (0,0) {};
\end{tikzpicture}
\caption{A dissipating trajectory in the cylinder in which the singularity is in the $\mathbb{R}$ component of the fiber.}
\label{fig:trajectoriescylinder}
\end{figure}
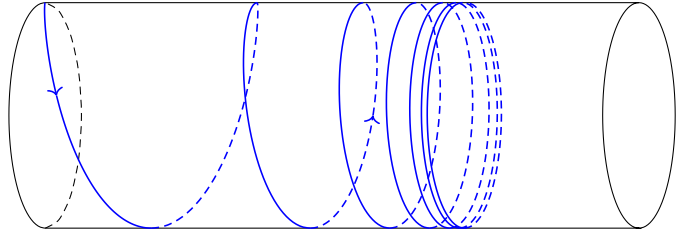

\subsection{A periodic potential}

Consider the periodic potential $f(\theta)=\frac{\lambda}{2}\cos\theta$, with $\theta\in S^1$. The phase space in this case is $T^*S^1\cong S^1\times \mathbb{R}$ and we denote by $p_\theta$ the conjugate momentum coordinate. The dynamical evolution of a physical system driven by the Hamiltonian $H(\theta,p_\theta)=\frac{1}{2}p_\theta^2+\frac{\lambda}{2}\cos\theta$ and the standard symplectic form $\omega=dp_\theta\wedge d\theta$ corresponds to the model of the simple pendulum. The Hamilton's equations of this model are:

\begin{equation}
    \begin{cases}
    \Dot{\theta}=p_\theta\\
    \Dot{p_\theta}=\frac{\lambda}{2}\sin\theta
    \end{cases}.
\label{eq:Hameqspendulum}
\end{equation}

The orbits of this system are of four types. There are two fixed points at $(0,0)$ (of stable type) and $(\pi,0)$ (of saddle type), two homoclinic orbits at the fixed point $(\pi,0)$, a $1$-parametric family of periodic orbits encircling the stable point between the two homoclinic orbits, and a $1$-parametric family of periodic orbits around the cylinder filling the rest of the space away from the homoclinic orbits (see the phase portrait on the left of Figure \ref{fig:phasespaceperiod}). The position of a particle in this system is bounded and so is its velocity, but it depends on the initial conditions whether it keeps moving periodically or stabilizes (which only happens at the fixed points or the homoclinic orbits).

The same Hamiltonian together with the twisted $b$-symplectic form $\omega=\frac{1}{p_\theta}dp_\theta\wedge d\theta$ gives a very different dynamics, which is encoded in the following equations of motion:
\begin{equation}
    \begin{cases}
    \Dot{\theta}=p_\theta^2\\
    \Dot{p_\theta}=\frac{\lambda}{2}p_\theta\sin\theta
    \end{cases}.
\label{eq:Hameqsperiod}
\end{equation}

On the right of Figure \ref{fig:phasespaceperiod} we can see the phase space representation of the orbits of this system.

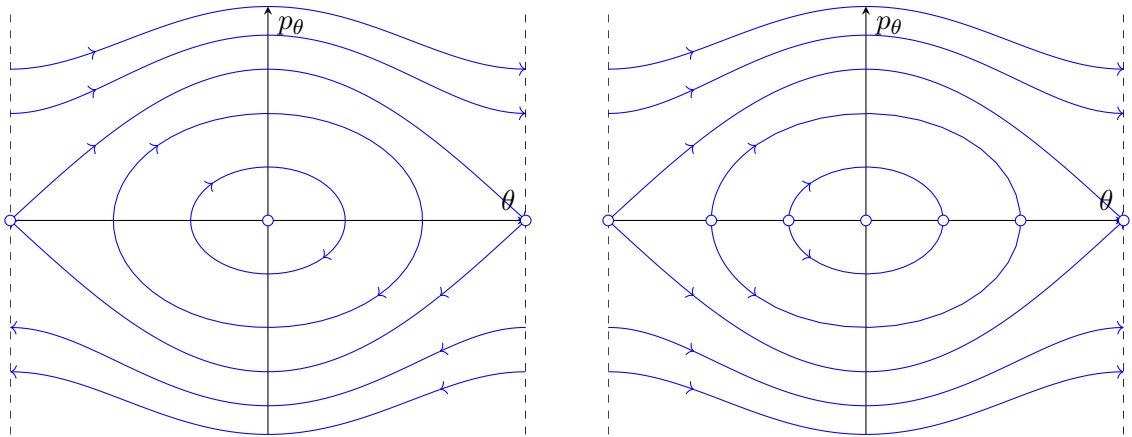
\begin{figure}[ht!]
\centering
\begin{tikzpicture}
    \begin{axis}[axis lines = middle,xlabel = $\theta$,ylabel = $p_\theta$,ticks=none,xmin=-2,xmax=2,ymin=-2,ymax=2]
    \foreach \i in {1,...,3}{
    \addplot[domain=-180:-120,samples=50,color=blue,->]({x/90},{sqrt(\i+cos(x))});
    \addplot[domain=-180:120,samples=50,color=blue,<-]({x/90},{-sqrt(\i+cos(x))});
    \addplot[color=blue,mark=*,fill=white] coordinates{(2*\i-4,0)};}
    \foreach \i in {1,...,3}{
    \addplot[domain=-120:180,samples=50,color=blue,->]({x/90},{sqrt(\i+cos(x))});
    \addplot[domain=120:180,samples=50,color=blue,<-]({x/90},{-sqrt(\i+cos(x))});}
    \foreach \i in {1,...,2}{\addplot[domain=135:-45,samples=30,color=blue,->]({\i*0.6*cos(x)},{\i/2*sin(x)});
    \addplot[domain=315:135,samples=30,color=blue,->]({\i*0.6*cos(x)},{\i/2*sin(x)});}
    \addplot [mark=none,dashed] coordinates {(-2, -2) (-2, 2)};
    \addplot [mark=none,dashed] coordinates {(2, -2) (2, 2)};
    \end{axis}
\end{tikzpicture}
\hspace{20pt}
\begin{tikzpicture}
    \begin{axis}[axis lines = middle,xlabel = $\theta$,ylabel = $p_\theta$,ticks=none,xmin=-2,xmax=2,ymin=-2,ymax=2]
    \foreach \i in {1,...,3}{
    \addplot[domain=-180:-120,samples=50,color=blue,->]({x/90},{sqrt(\i+cos(x))});
    \addplot[domain=-180:-120,samples=50,color=blue,->]({x/90},{-sqrt(\i+cos(x))});
    \addplot[color=blue,mark=*,fill=white] coordinates{(2*\i-4,0)};}
    \foreach \i in {1,...,3}{
    \addplot[domain=-120:180,samples=50,color=blue,->]({x/90},{sqrt(\i+cos(x))});
    \addplot[domain=-120:180,samples=50,color=blue,->]({x/90},{-sqrt(\i+cos(x))});}
    \foreach \i in {1,...,2}{
    \addplot[domain=180:135,samples=30,color=blue,->]({\i*0.6*cos(x)},{\i/2*sin(x)});
    \addplot[domain=-135:135,samples=30,color=blue]({\i*0.6*cos(x)},{\i/2*sin(x)});
    \addplot[domain=180:225,samples=30,color=blue,->]({\i*0.6*cos(x)},{\i/2*sin(x)});}
    \addplot [mark=none,dashed] coordinates {(-2, -2) (-2, 2)};
    \addplot [mark=none,dashed] coordinates {(2, -2) (2, 2)};
    \addplot[color=blue,mark=*,fill=white] coordinates{(-1.2,0)};
    \addplot[color=blue,mark=*,fill=white] coordinates{(-0.6,0)};
    \addplot[color=blue,mark=*,fill=white] coordinates{(1.2,0)};
    \addplot[color=blue,mark=*,fill=white] coordinates{(0.6,0)};
    \end{axis}
\end{tikzpicture}
\caption{Some orbits in the phase spaces of system (\ref{eq:Hameqspendulum}) on the left and system (\ref{eq:Hameqsperiod}) on the right for $f(q)=\frac{\lambda}{2}\cos\theta$.}
\label{fig:phasespaceperiod}
\end{figure}

The equivalent second order ODE is:
\begin{equation}
    \ddot{\theta}=\lambda\dot{\theta}\sin\theta.
    \label{eq:secondorderODEperiod}
\end{equation}

We observe that, differently from the twisted $b$-symplectic models studied before, in this one there are still periodic orbits, namely, the $1$-parametric family of periodic orbits that fill the two half-spaces away from the two homoclinic orbits. On the other hand, the dynamics inside the region enclosed by the homoclinic orbits is the same dynamics that we obtained for the quadratic potential.

\subsection{General dynamics of the twisted $b$-symplectic model}

With the previous illustrative examples in mind, the interpretation of the twisted $b$-symplectic model in the general case is straightforward. Recall that the Hamilton's equations derived from the Hamiltonian $H=\frac{p^2}{2}+f(q)$ and the twisted $b$-symplectic form
$\omega=\frac{1}{p}dp\wedge dq$ are:
\begin{equation*}
    \begin{cases}
    \Dot{q}=p^2\\
    \Dot{p}=-p\frac{\partial f}{\partial q}
    \end{cases}.
    \label{eq:twistedbhamvf}
\end{equation*}

If $(q,p)$ are assumed to be the coordinates of the phase space of a mechanical system, the behaviour of a particle under $H$ is clearly conditioned by the singularity of the system at $p=0$.

If a particle starts at any $p\neq 0$ and follows a trajectory that tends to $p=0$, the most direct physical interpretation of the model is that of a decelerating motion, for instance the one encountered in a dissipative system. If it escapes from $p=0$, it can be interpreted just in the same way but with time reversed.

The implications of having the singularity at the fibers of the cotangent bundle extend further than it seems at first glance. The singularity determines an unreachable location in the fiber, i.e., that zero momentum is unreachable. But the momentum of the particle will tend there (or escape from there) for many different initial conditions. As a consequence, the position of the particle is also indirectly conditioned by the singularity, since tending to zero momentum will cause the position to stabilize.

The twisted $b$-symplectic model with the singular fiber at zero is, then, a physical model that can explain systems in which velocity decays and so does the change in position of the particle as a consequence.

\begin{remark}
For a Hamiltonian of the type $H=\frac{p^2}{2}+f(q)$, the relation between the twisted $b$-Hamiltonian vector field $X^b_H$ (given by Equation (\ref{eq:Hameqs})) and the usual Hamiltonian vector field $X_H$ (given by Equation (\ref{eq:Hameqs2})) is $X^b_H = pX_H$. Accordingly, the orbits of $X^b_H$ coincide with those of $X_H$ away from the critical line $\{p = 0\}$ (which is now filled by a set of stagnation points of $X^b_H$) and up to a reversed time parametrization for $p<0$. As a consequence, any point $(q,p=0)$ which is not a critical point of potential $f(q)$ yields an escape orbit of $X^b_H$ because the corresponding orbit of $X^H$ is regular and transverse to the line $p = 0$.
\end{remark}

\subsection{Geometric considerations}
One could ask whether the dynamics of the model can be obtained as the cotangent lift of an action or, more precisely, as a $b$-cotangent lift (see Sections \ref{subsec:bcotangentlifts} and \ref{subsec:cotangentmodels}). As we show next using geometric arguments, the answer to this question is negative.

\begin{lemma}
The dynamics of the model presented in Section \ref{section:bsymplecticmodel} can not be obtained as the cotangent lift in Proposition \ref{prop:cotangentlifthamiltonian}. Thus, it can not be a cotangent model as in Theorem \ref{thm:cotangentliftALM}.
\end{lemma}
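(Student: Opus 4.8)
The plan is to show that the twisted $b$-cotangent lift of Proposition \ref{prop:cotangentlifthamiltonian} preserves the twisted Liouville $1$-form $\lambda = c\log(p_1)\,dq_1 + \sum_{i=2}^n p_i\,dq_i$, hence any Hamiltonian vector field arising as such a lift must generate a flow of symmetries of $\lambda$. I would then exhibit a concrete invariant of the lifted dynamics---namely the conserved momentum map $\mu$---and check that the dynamics of our model (system (\ref{eq:Hameqs})) fails to preserve the corresponding quantity, forcing a contradiction. In the one-dimensional case this is cleanest: the moment map of a toric lift is $\mu = c\log|p_1|$, so the lifted vector field $X^\#$ would have to satisfy $\mathcal{L}_{X^\#}\lambda = 0$, and in particular the flow would preserve the level sets of $c\log|p_1|$, i.e. the value of $|p_1|$ itself.

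First I would recall from the discussion following the cotangent lift definition that a $b$-cotangent lift $\hat\rho$ preserves the twisted Liouville form $\lambda$, and therefore $\omega = d\lambda$; consequently its infinitesimal generator $X^\#$ satisfies $\mathcal{L}_{X^\#}\lambda = 0$. Next I would compute the flow of the twisted $b$-Hamiltonian vector field $X^b_H$ of our model explicitly. From system (\ref{eq:Hameqs}) with general potential $f$, the $p$-component evolves by $\dot p = -p\,\partial f/\partial q$, so along a nonconstant orbit with $p\neq 0$ the quantity $\log|p|$ is \emph{not} preserved whenever $\partial f/\partial q \neq 0$ (and the position $q$ genuinely moves). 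This directly contradicts the preservation of the moment map $\mu = c\log|p|$ that any toric $b$-cotangent lift would enforce.

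The key structural point---and the main obstacle to make rigorous---is to argue that our dynamics cannot be realized as \emph{any} $b$-cotangent lift, not merely the toric one for which the moment map formula is explicit. For this I would invoke the invariance of $\lambda$ itself: if $X^b_H = X^\#$ were the fundamental vector field of a lifted action, then $\iota_{X^b_H}\lambda$ would be (up to the moment-map identity $\langle\mu, X\rangle = \langle\lambda, X^\#\rangle$) a first integral, yet a direct computation of $\iota_{X^b_H}\lambda$ along system (\ref{eq:Hameqs}) shows it varies in time for a generic linear potential $f(q)=\tfrac{\lambda}{2}q$. The delicate part is ruling out that the dynamics factors through a reparametrization or a change of generators of $\mathfrak{g}$; here I would use that the escape orbits limit onto the critical set $Z=\{p=0\}$, whereas a lifted action generated by a \emph{complete} vector field on the compact/torus base would produce only orbits that remain in the regular region or close up, never asymptotically approaching $Z$ in infinite time. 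This asymptotic behavior, already established in the description of the dynamics above, is the genuine geometric obstruction.

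Finally, the second assertion follows immediately: by Theorem \ref{thm:cotangentliftALM}, an integrable system semilocally around a regular Liouville torus is equivalent to the canonical cotangent lift model $(T^*\mathbb{T}^n)_{can}$, whose dynamics \emph{are} precisely cotangent-lift dynamics. Since we have just shown our model is not of this form, it cannot coincide with such a cotangent model. I expect the toric moment-map computation to be routine, and the genuine difficulty to lie in the completeness/asymptotics argument that excludes \emph{all} lifted actions rather than only the toric ones.
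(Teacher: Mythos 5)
Your toric-case computation is sound and in fact coincides with the paper's secondary remark (given after its proof): a lifted action with non-vanishing modular weight forces a moment map containing a term $c\log|p|$, which is incompatible with $H=\frac{p^2}{2}+f(q)$. But the step you yourself flag as the crux --- excluding \emph{all} lifted actions, not just toric ones --- is where the proposal breaks down. The asymptotics/completeness claim is false as a general statement about cotangent lifts: the cotangent lift of the complete dilation flow $\rho_t(q)=e^{t}q$ on $\mathbb{R}$ is $\hat\rho_t(q,p)=(e^{t}q,\,e^{-t}p)$, a lifted action whose orbits approach $Z=\{p=0\}$ asymptotically in infinite time without ever reaching it. So asymptotic convergence of escape orbits to the critical set is not an obstruction to being a lift. (Note also that the base of the model is $\mathbb{R}$, not a torus, so no compactness of the base is available, and a nontrivial toric action on the base does not even exist --- which is precisely why the explicit moment-map formula cannot carry the whole argument.)

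The paper closes the general case with a much simpler structural observation that your proposal misses: by definition a cotangent lift is fibered over the base action, $\pi\circ\hat\rho_g=\rho_g\circ\pi$, so the $\frac{\partial}{\partial q}$-component of any fundamental vector field of a lifted action is independent of the fiber coordinate $p$. But solving $\iota_X\omega=-dH$ for the twisted form $\omega=\frac{1}{p}dp\wedge dq$ forces $X=p^2\frac{\partial}{\partial q}-f'(q)\,p\frac{\partial}{\partial p}$, whose base component $p^2$ manifestly depends on $p$ --- a contradiction that rules out every lift at once, toric or not. Alternatively, your own $\lambda$-invariance idea can be completed without the faulty asymptotics step: since the twisted $b$-cotangent lift preserves the twisted Liouville form (as stated in Section \ref{subsec:bcotangentlifts}), $\mathcal{L}_{X^\#}\lambda=0$ gives $\iota_{X^\#}\omega=-d(\iota_{X^\#}\lambda)$, so if $X^b_H=X^\#$ then $H-\iota_{X^b_H}\lambda=\frac{p^2}{2}+f(q)-c\,p^2\log p$ would have to be constant, which it is not (it depends on $q$ for any nonconstant $f$). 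Either repair makes the asymptotics detour unnecessary; as written, however, that step would fail and the proof is incomplete. Your deduction of the second assertion from Theorem \ref{thm:cotangentliftALM} is fine.
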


\begin{proof}
In the lowest dimensional case, suppose that there is an an action $\rho:G\times \mathbb{R}\to \mathbb{R}$ with cotangent lift $\hat\rho:G\times T^*\mathbb{R}\to T^*\mathbb{R}$ and with moment map $H=\frac{p^2}{2}+f(q)$. By definition, $\hat{\rho}_g(q,p)$ restricted to the base $\mathbb{R}$ of $T^*\mathbb{R}\ni (q,p)$ is equal to $\rho_g(q)$ for any $g\in G,q\in\mathbb{R}$. Then, the restriction does not depend on the fiber coordinate $p$ and the same happens for the $\frac{\partial}{\partial q}$ component of the infinitesimal generator of $\hat{\rho}_g(q,p)$. But this infinitesimal generator has to be of the form $$X=p^2\frac{\partial}{\partial q}-\frac{\partial f(q)}{\partial q}p\frac{\partial}{\partial p}$$ in order to satisfy $\iota_X\omega=-dH$, where $\omega$ is the twisted $b$-symplectic form $\frac{1}{p}dp\wedge dq$. And the term $p^2\frac{\partial}{\partial q}$ depends on the fiber coordinate $p$, which is a contradiction.
\end{proof}

Another way to see it is the following. By Proposition \ref{prop:cotangentlifthamiltonian}, the twisted $b$-cotangent lift $\hat\rho$ is $b$-Hamiltonian and its moment map $\mu:T^*M \to \mathfrak{g}^*$ contains a logarithm term associated to the toric component of the action $\rho$. In Proposition 26 of \cite{BraddellKiesenhoferMiranda18}, it is proved that the action of $G$ on the mapping torus $Z$ always lifts to an action of a product group $S^1 \times H$ on a finite trivializing cover of $Z$, where $H$ is compact and connected. $G$ is necessarily of the form $G = (S^1 \times H)/\Gamma$ for a finite cyclic subgroup $\Gamma$. Hence, the moment map of the lifted action of a group action with non-vanishing modular weight \cite{BraddellKiesenhoferMiranda18,MatveevaMiranda22} has to include a term of the form $\mu=c\log|p|$, which is not compatible with the Hamiltonian $H=\frac{p^2}{2}+f(q)$. In higher dimensions, the model can not be a cotangent lift for the same reason.

\section{Time-dependent singular models}
\label{section:timedependentmodels}

In order to generalize this friction model to multiple dimensions, the key idea is to extend the configuration space $Q$ to $Q\times\mathbb{R}$. The $\mathbb R$ component in $Q\times\mathbb{R}$  describes the real time $t$ while the dynamics inside the phase space is computed according to a curvilinear time $s$. This is conceptually the idea of the well-known method of characteristics in PDEs. After computing the solution, one only needs to project the trajectory on the space $Q$ and read the time on the real axis. We require $\dot{t}>0$ to be consistent and we denote by $q$ the position in $Q$ and by $p$ the associated momentum. We also denote by $E$ the conjugated variable associated with $t$, since the energy is the natural conjugate of time in physics. Using the results of Section 3, where we have seen how to introduce dissipation in one dimension thanks to a $b$-symplectic form, our goal is to include the dissipation in this new energy variable. Therefore, the non-dissipative dynamics will proceed classically, with an energy which is dissipated through time.

To start, consider the Hamiltonian
\begin{equation}
H(p,q,t,E)=\frac{p^2}{2}+V(q,t)-E.
\end{equation}
Assuming $E$ is the energy of the system, one expects the preservation of the Hamiltonian (the conservation of $H=0$) along the physical trajectory. We use the canonical symplectic form 
\begin{equation} \omega=\sum_i\mathrm{d}p_i\wedge\mathrm{d}q_i-\mathrm{d}E\wedge \mathrm{d}t.\end{equation}
The associated dynamics writes 
\begin{eqnarray}
\dot{q}_i&=p_i \quad\quad\quad \dot{p}_i&=-\frac{\partial V(q,t)}{\partial q_i} \\
\dot{t}&=1 \quad\quad\quad \dot{E}&=\frac{\partial V(q,t)}{\partial t}
\end{eqnarray}
Therefore, in this case, the curvilinear coordinate is the real time: $s=t$. The particle follows the expected dynamics with a potential that may depend on time. 

Now, to model friction, it is natural to consider adding to the Hamiltonian a factor depending on a friction coefficient $\lambda$. The friction will slow down the dynamics and thus $t$ compared with $s$. However, the potential remains associated to the real time and thus it appears accelerated compared with the curvilinear time. In order to use this effective time, we need to re-scale the Hamiltonian to deduce the suitable time re-scaling. When considering dissipative dynamics, it is natural to expect an exponential re-scaling. Indeed, a close-to-the-equilibrium relaxation mode provides a Lyapunov coefficient to control the decay of the perturbation \cite{Groot2013,Glansdorff1971}. Such re-scaling ideas have already been suggested in different contexts, see for instance \cite{Feix1985}. For our purpose, we consider the following Hamiltonian
\begin{equation}
H(p,q,t,E)=\frac{p^2}{2}+\frac{e^{2\lambda t}}{\lambda^2}V(q,t)- \frac{e^{\lambda t}}{\lambda}E,
\end{equation}
with the same canonical symplectic form. The associated dynamics writes 
\begin{eqnarray}
\dot{q}_i&=p_i \quad\quad\quad \dot{p}_i&=-\frac{e^{2\lambda t}}{\lambda^2}\frac{\partial V(q,t)}{\partial q_i}\\
\dot{t}&=\frac{e^{\lambda t}}{\lambda} \quad\quad\quad \dot{E}&=\frac{e^{2\lambda t}}{\lambda^2}\frac{\partial V(q,t)}{\partial t}+\frac{2e^{2\lambda t}}{\lambda} V(q,t)- e^{\lambda t}E
\end{eqnarray}
The two first terms describe the energy  linked with the time-dependence of the potential. The last term describes the loss of energy caused by the viscous dissipation. The equation for $t$ can be solved exactly: $t(s)=-\frac{\ln(-s)}{\lambda}$. In particular, $\mathrm{d}s=\lambda e^{-\lambda t}\mathrm{d}t$. Let us now reconstruct the particle dynamics in real time:
\begin{eqnarray}
\frac{\mathrm{d}q_i}{\mathrm{d}t}=\lambda e^{-\lambda t}\dot{q}_i=\lambda e^{-\lambda t}p_i \quad\quad\quad \frac{\mathrm{d}p_i}{\mathrm{d}t}=\lambda e^{-\lambda t}\dot{p}_i&=-\frac{e^{\lambda t}}{\lambda}\frac{\partial}{\partial q_i} V(q,t),
\end{eqnarray}
and, therefore,
\begin{equation}
\frac{\mathrm{d}^2 q_i}{\mathrm{d}t^2}=-\lambda \frac{\mathrm{d}q_i}{\mathrm{d}t}-\frac{\partial}{\partial q_i} V(q,t),
\end{equation}
which is the equation of a particle in a $n$-dimensional space with a viscous friction of coefficient $\lambda$ and in a time-dependent potential $V(q,t)$. 

The friction arises from an exponential re-scaling of time. Such a re-scaling is actually the source of a singularity and, then, singular geometry arises naturally after a change of variables from $t$ to $s$ in the symplectic form using $s(t)=e^{-\lambda t}$ and $\mathrm{d}t=-\frac{\mathrm{d}s}{\lambda s}$. For convenience, we also redefine $E_s=E/\lambda$. Then, we obtain
\begin{equation}
\omega=\sum_i\mathrm{d}q_i\wedge\mathrm{d}p_i+\frac{1}{ s}\mathrm{d}s\wedge \mathrm{d}E_s,
\end{equation}
which is the non-twisted canonical $b$-symplectic form. In these coordinates, the Hamiltonian becomes 
 \begin{equation}
H(p,q,s,E_s)=\frac{p^2}{2}+\frac{V(q,t(s))}{(\lambda s)^2}- \frac{E_s}{s},
\end{equation}
which has a singularity of higher order. Indeed, it is a $b^2$-function and not a $b$-function. Such a discrepancy between the degree of the singularity in the symplectic form and the degree of the singularity in total energy of the system is not new (see \cite{DelshamsKiesenhoferMiranda17, MirandaPlanas22} for other examples).

Summing up, the Hamiltonian is simpler in these coordinates. But the main advantage is that the intrinsic time (the curvilinear coordinate) now corresponds to the coordinate $s$. Indeed, the equations of motion now read as follows:
\begin{eqnarray}
\dot{q}_i&=p_i \quad\quad\quad \dot{p}_i&=-\frac{1}{(\lambda s)^2}\frac{\partial V(q,t(s))}{\partial q_i}\\
\dot{s}&=1 \quad\quad\quad \dot{E}&=\frac{\partial}{\partial s}\left(\frac{1}{(\lambda s)^2} V(q,t(s))\right)+\frac{E_s}{s^2}
\end{eqnarray}
The coordinate $s$ is now trivial and we may omit this dimension, leaving a standard Hamiltonian dynamics with a modified time-dependent potential. The dynamics then writes as:
\begin{equation}
    \ddot{q}_i(s)=-\frac{1}{(\lambda s)^2}\frac{\partial V(q,s)}{\partial q_i}
\end{equation}
and the real-time solution is obtained by undoing the change of variables $s(t)=e^{-\lambda t}.$ 

\begin{remark}[Connection with magnetism]

One could think about extending the singular models considered in this article to include the effects of an electromagnetic field acting on a charge. In that case, the configuration space is $\mathbb{R}^3$, with an electric potential function $\phi$ and a magnetic vector potential  $\textbf{A}$. The corresponding electric and magnetic fields is $\textbf{E}=\nabla \phi$ and $\textbf{B}=\nabla \times \textbf{A}$ respectively, both depending on the position $\textbf{q}\in \mathbb{R}^3$ of the particle. The force $\textbf{F}$ acting on the particle is the Lorentz force $\textbf{F}= e(\textbf{E} + \textbf{v} \times \textbf{B})$, a function of both position and velocity.

The problem can be studied in the Hamiltonian setting by identifying the tangent and the cotangent vectors via a fixed Riemannian metric. The magnetic field $\textbf{B}$ is associated with a 2-form $B=\iota_{\textbf{B}}\Omega$ where $\Omega$ is the volume form associated with the fixed Riemannian metric and $\iota$ is the internal product. The Maxwell equation $\nabla \cdot \textbf{B} = 0$, which allows the existence of the vector potential, becomes the condition $dB = 0$. By means of the Poincaré Lemma, there exists a $1$-form $A$ such that $B=dA$. Actually, we have the correspondence $A=\langle \textbf{A},\cdot\rangle$. The electrodynamics naturally appears in the phase space $(T^*Q, \omega_B)$, where $\omega_B$ is the sum of the canonical symplectic form $ \omega_Q$ on $T^*Q$ and the pull-back of the $2$-form $B$ by the natural projection $\pi:T^*Q\to Q$, i.e., $\omega_B=\omega_Q+\pi^*B$.

The discussion on the multidimensional case is still valid under the presence of a time-dependent magnetic potential $B=B_{ij}(q,t)\mathrm{d}q_i\wedge\mathrm{d}q_j$, and the subtlety in the case of dissipation is to adjust the speed of the magnetic field. Similarly to what has been done in this section, the recipe in this case would be given by the change $B\rightarrow \frac{e^{\lambda t}}{\lambda} B$. Nevertheless, since the magnetic field would no longer be closed, the method presented here would need further development to be convenient for magnetic dynamics.
\end{remark}

\section{Conclusions}
\label{section:conclusions}
This paper aims to provide a finite-dimensional analogy of fluid mechanics using the techniques of $b$-symplectic geometry to model dissipation in conditions of no turbulence. The twisted $b$-symplectic model presented here is suited for the case of laminar viscous flows in which the Reynolds number is small enough. In general, the model is good for flows of low complexity and no turbulence and for which the Stokes' Law is a valid approximation.

The fact that the twisted $b$-symplectic model can not be obtained from a cotangent lift is a geometrical result that reveals that there is essential information about the dynamics of the system contained in the fibers of the configuration space $T^*Q$. Indeed, the core of the $1$-dimensional linear model is that the fixed singularity at the $0$ fiber of the cotangent bundle $T^*Q$ makes any trajectory tend to a fixed point on the base. And in the non-linear models, the effect is analogous for any orbit that would intersect the $0$ fiber in the classical non-twisted symplectic setting.

The key features of the model are perfectly illustrated in the phase portraits of Figures \ref{fig:phasespace}, \ref{fig:phasespacequad} and \ref{fig:phasespaceperiod}. There, the orbits that intersect transversally the $0$ fiber in the classical symplectic setting are transformed into escape orbits when replacing the standard symplectic form by the twisted $b$-symplectic form. This type of orbits, which can be seen as union of trajectories, also arises in other contexts such as celestial mechanics, and has been recently investigated in the $b$-contact context by Miranda, Oms and Peralta-Salas \cite{MirandaOms21,MirandaOmsPeralta-Salas22}. See also \cite{Miranda20} for singular periodic orbits in the realm of $b$-symplectic geometry where the existence of such periodic orbits escapes the classical identification with critical points of the action functional inaugurated by Rabinowitz \cite{Rabinowitz78}. Our twisted models detect an infinite number of this type of singular orbits. This situation aligns with the conjecture of \emph{"2 or infinity"} in the number of  periodic Reeb orbits also in the singular set-up \cite{HoferWysockiZehnder98, Cristofaro-GardinerHutchingsPomerleano19}.

We have seen that dissipation naturally emerges from a singular symplectic form in the direction of the singularity. In particular, this provides a simple model for uni-dimensional friction. This can be generalized to multiple dimensions with arbitrary external field by including an additional dimension to describe the physical time and energy. The dissipation and then the singularity must be applied on this extra-dimension, while the Hamiltonian must be re-scaled accordingly to the dissipation coefficient. Therefore, any $d$-dimensional system with an external potential and a global dissipation given by a fixed dissipation factor can be naturally described by an Hamiltonian dynamics on a $(d+1)$-dimensional $b$-symplectic manifold.

\bibliographystyle{alpha}
\bibliography{bibfile}

\end{document}